\newif\ifSpringer
\newif\ifBook
\newcommand{\chapitre}[2]{
\ifBook
\chapter[#1]{#2}
\else
\section[#1]{#2}
\fi}
\newcommand{\ssection}[2]{
\ifBook
\section[#1]{#2}
\else
\subsection[#1]{#2}
\fi}
\newcommand{\sssection}[2]{
\ifBook
\subsection[#1]{#2}
\else
\subsubsection{#2}
\fi}
\newcommand{\ssssection}[1]{
\ifBook
\subsubsection{#1}
\else
\paragraph{#1.}
\fi}
\newcommand{\bookarticle}{
\ifBook
book
\else
article
\fi}
\newcommand{\bookarticlepoint}{
\ifBook
book.
\else
article.
\fi}
\newcommand{\booksecvirg}{
\ifBook
book,
\else
article,
\fi}
\newcommand{\booksecpoint}{
\ifBook
book.
\else
article.
\fi}
\newcommand{\Chapsec}{
\ifBook
Chapter
\else
Section 
\fi}
\newcommand{\chaptersection}{
\ifBook
chapter
\else
section 
\fi}
\newif\ifElsev
\newif\ifcomment
\title{Rationality and Escalation\\ in Infinite Extensive Games}
\author{Pierre Lescanne\thanks{email: \textsf{Pierre.Lescanne@ens-lyon.fr}, phone: +33472728683, fax: +33472728080}}  
\institute{Universit\'e de Lyon, ENS de Lyon, CNRS (LIP), \\ 46 all\'ee
d'Italie, 69364 Lyon, France}
\author{Pierre Lescanne\thanks{email: \textsf{Pierre.Lescanne@ens-lyon.fr}, phone: +33472728683, fax: +33472728080}\\ 
Universit\'e de Lyon, ENS de Lyon, CNRS (LIP), \\ 46 all\'ee
d'Italie, 69364 Lyon, France}
\date{}
\DeclareMathAlphabet{\mathpzc}{OT1}{pzc}{m}{it}
\newcommand{\timestamp}{
  {\protect\small\sl\today\ --
    \ifnum\timehh<10 0\fi\number\timehh\,:\,
    \ifnum\timemm<10 0\fi\number\timemm}}
\newproof{proof}{Proof} 
\newenvironment{proof}[1]{\begin{quotation}\noindent\textsf{Proof:} #1}%
{\(\Box\)\end{quotation}}
\newtheorem{prop}{Proposition}
\newcommand{\Coq}{\textsc{Coq}} 
\newcommand{\ie}{i.e.,~} 
\newcommand{\leut}{\ensuremath{\preceq}} 
\newcommand{\hbis}{\sim_{h}}
\newcommand{\sbis}{\sim_{s}}
\newcommand{\gbis}{\mathop{\sim_{g}}}
\newcommand{\lft}{\ensuremath{\ell}}
\newcommand{\ltl}{\textit{``leads to a leaf''}}
\newcommand{\altl}{\textit{``always leads to a leaf''}}
\newcommand{\ob}{\mathfrak{o}}
\newcommand{\nat}{\ensuremath{\mathbb{N}}}
\newcommand{\Alice}{\textsc{Alice}}
\newcommand{\Bob}{\textsc{Bob}}
\newcommand{\alice}{\textsc{\scriptsize Alice}}
\newcommand{\bob}{\textsc{\scriptsize Bob}}
\newcommand{\rgt}{\ensuremath{\mathpzc{r}}}
\newcommand{\gs}{\textbf{g}}
\newcommand{\coconva}{\ensuremath{\mathop{"<<-"\!
\raisebox{1.5pt}{\(\scriptstyle a\)}\! "->>"}}} 
\newcommand{\conva}{\ensuremath{\mathop{\vdash\! \raisebox{2pt}{\(\scriptstyle a\)}\! \dashv}}} 
\newcommand{\convalice}{\ensuremath{\mathop{\vdash\! \raisebox{2pt}{\(\scriptstyle \Alice\)}\! \dashv}}}
\renewcommand{\H}{\ensuremath{\mathcal{H}}}
\newcommand{\og}{\ensuremath{\ll\!}}
\newcommand{\fg}{\ensuremath{\hspace*{-1.5pt}\gg}}
\newcommand{\ogame}{\ensuremath{\langle\!|}}
\newcommand{\fgame}{\ensuremath{|\!\rangle}}
\newcommand{\zone}{\textsf{zero\_one}}
\definecolor{darkbrown}{cmyk}{.3,.75,.75,.15}
\definecolor{vertfonce}{rgb}{0,.5,0}
\newcommand{\rouge}[1]{{\color{red} #1}}
\definecolor{vertfonce}{rgb}{0,.5,0}
\newcommand{\ziginf}
{
  \begin{figure}[thb]
    \centering
    \begin{center}
\(\psframe[linewidth=2pt,framearc=.33](.4,-.3)(1.2,.3)
\)
  \begin{psmatrix}[rowsep=6pt,colsep=1.8pt] 
    &&&&[name=z]{$\bullet$}\\
   &&&[name=a]{$\bullet$} \\
      &&[name=b]{$`()$} && [name=c]{$\bullet$} \\%
      &&&[name=d]{\textsf{zig}} && [name=e]{$`()$}
      \ncline{a}{b} \ncline{a}{c} %
      \ncline{c}{d} \ncline{c}{e} %
      \ncline[linestyle=dotted]{z}{a}
\end{psmatrix}
\qquad
\raisebox{10pt}{{\huge $"=>"$}}
\qquad
  \(\psframe[linewidth=2pt,framearc=.33](0,-.3)(2,1.5)\)
\begin{psmatrix}[rowsep=6pt,colsep=1.8pt]
    &&&&[name=z]{$\bullet$}\\
    &&&[name=a]{$\bullet$} \\
      &&[name=b]{$`()$} && [name=c]{$\bullet$} \\%
      &&&[name=d]{\textsf{zig}} && [name=e]{$`()$}
      \ncline{a}{b} \ncline{a}{c} %
      \ncline{c}{d} \ncline{c}{e} %
      \ncline[linestyle=dotted]{z}{a}
\end{psmatrix} %
\end{center}

\begin{center}
  \rule{200pt}{1pt}
\end{center}

\begin{center}
  \(\psframe[linewidth=2pt,framearc=.33](-.1,-.2)(.7,.3)\)
  \textsf{zig}
\end{center}
    \caption{How cofix works on \textsf{zig} for \textit{is infinite}?}
    \label{fig:ziginf}
  \end{figure}
}
\newcommand{\figdollar}
{
\begin{figure}[tb]
  \centering
  \hspace*{-30pt} 
\(\begin{psmatrix}[colsep=15pt,rowsep=20pt]
& [name=o]
  &{\ovalnode{a}{\Alice}} &{\ovalnode{b}{{\scriptstyle \Bob}}} & {\ovalnode{a1}{\Alice}} &{\ovalnode{b1}{{\scriptstyle \Bob}}}  & [name=c] & [name=d]
  &\\
  &[name=p]\phantom{\scriptstyle v+n, n} 
  &[name=e] {\scriptstyle v+n,n} 
  &[name=f] {\scriptstyle n+1,v+n} %
  &[name=e1] {\scriptstyle v+n+1,n+1} 
  &[name=f1] {\scriptstyle n+2,v+n+1} %
  &[name=h] \phantom{\scriptstyle 2n+1, 2n+2} 
  \ncarc[arrows=->,linestyle=dotted]{o}{p} %
  \ncarc[arrows=->,linestyle=dotted]{o}{a} %
  \ncarc[arrows=->]{a}{b} %
  \ncarc[arrows=->]{b}{a1} %
  \ncarc[arrows=->]{a1}{b1} %
  \ncarc[arrows=->,linestyle=dotted]{b1}{c} %
  \ncarc[arrows=->,linestyle=dotted]{c}{d} %
  \ncarc[arrows=->]{a}{e} %
  \ncarc[arrows=->]{b}{f} %
  \ncarc[arrows=->]{a1}{e1} %
  \ncarc[arrows=->]{b1}{f1} %
  \ncarc[arrows=->,linestyle=dotted]{c}{h} %
\end{psmatrix}
\)
  \caption{The \emph{dollar auction} game}
  \label{fig:dol_auct}
\end{figure}
}
\newcommand{\figfourstrat}
{\begin{figure}[thbp]
  \centering

    \hspace*{-30pt} 
\(\begin{psmatrix}[colsep=20pt,rowsep=20pt]
  &{\ovalnode{a}{{\scriptstyle \Alice}}} &{\ovalnode{b}{{\scriptstyle \Bob}}} & {\ovalnode{a1}{{\scriptstyle \Alice}}} &{\ovalnode{b1}{{\scriptstyle \Bob}}}  & [name=c] & [name=d]
  &\\
  &[name=e] {\scriptstyle v+n,n} 
  &[name=f] {\scriptstyle n+1,v+n} %
  &[name=e1] {\scriptstyle v+n+1,n+1} 
  &[name=f1] {\scriptstyle n+2,v+n+1} %
  &[name=h] \phantom{\scriptstyle n+1,v+n} 
  \ncarc[arrows=->]{a}{b} %
  \ncarc[arrows=->]{b}{a1} %
  \ncarc[arrows=->]{a1}{b1} %
  \ncarc[arrows=->,linestyle=dotted]{b1}{c} %
  \ncarc[arrows=->,linestyle=dotted]{c}{d} %
  \ncarc[arrows=->,linewidth=.08]{a}{e} %
  \ncarc[arrows=->,linewidth=.08]{b}{f} %
  \ncarc[arrows=->,linewidth=.08]{a1}{e1} %
  \ncarc[arrows=->,linewidth=.08]{b1}{f1} %
  \ncarc[arrows=->,linewidth=.08,linestyle=dotted]{c}{h} %
\end{psmatrix}
\)

\bigskip

\textsf{dolAsBs}$_n$ aka \textbf{Always give up}

\bigskip

  \hspace*{-30pt} 
\(\begin{psmatrix}[colsep=20pt,rowsep=20pt]
  &{\ovalnode{a}{{\scriptstyle \Alice}}} &{\ovalnode{b}{{\scriptstyle \Bob}}} & {\ovalnode{a1}{{\scriptstyle \Alice}}} &{\ovalnode{b1}{{\scriptstyle \Bob}}}  & [name=a2] & [name=b2]
  &\\
  &[name=e] {\scriptstyle v+n,n} 
  &[name=f] {\scriptstyle n+1,v+n} %
  &[name=e1] {\scriptstyle v+n+1,n+1} 
  &[name=f1] {\scriptstyle n+2,v+n+1} %
  &[name=h] \phantom{\scriptstyle n+1,v+n} 
  \ncarc[arrows=->]{a}{b} %
  \ncarc[arrows=->,linewidth=.08]{a}{e} %
  \ncarc[arrows=->,linewidth=.08]{b}{a1} %
  \ncarc[arrows=->]{b}{f} %
  \ncarc[arrows=->]{a1}{b1} %
  \ncarc[arrows=->,linewidth=.08]{a1}{e1} %
  \ncarc[arrows=->,linestyle=dotted,linewidth=.08]{b1}{a2} %
  \ncarc[arrows=->]{b1}{f1} %
  \ncarc[arrows=->,linestyle=dotted]{a2}{b2} %
  \ncarc[arrows=->,linestyle=dotted,linewidth=.08]{a2}{h} %
\end{psmatrix}
\)

\bigskip

\textsf{dolAsBc}$_n$ aka \textbf{\Alice{} abandons always and \Bob{} continues always}

\bigskip

  \hspace*{-30pt} 
\(\begin{psmatrix}[colsep=20pt,rowsep=20pt]
  &{\ovalnode{a}{{\scriptstyle \Alice}}} &{\ovalnode{b}{{\scriptstyle \Bob}}} & {\ovalnode{a1}{{\scriptstyle \Alice}}} &{\ovalnode{b1}{{\scriptstyle \Bob}}}  & [name=a2] & [name=b2]
  &\\
  &[name=e] {\scriptstyle v+n,n} 
  &[name=f] {\scriptstyle n+1,v+n} %
  &[name=e1] {\scriptstyle v+n+1,n+1} 
  &[name=f1] {\scriptstyle n+2,v+n+1} %
  &[name=h] \phantom{\scriptstyle n+1,v+n} 
  \ncarc[arrows=->,linewidth=.08]{a}{b} %
  \ncarc[arrows=->,linewidth=.02]{a}{e} %
  \ncarc[arrows=->,linewidth=.02]{b}{a1} %
  \ncarc[arrows=->,linewidth=.08]{b}{f} %
  \ncarc[arrows=->,linewidth=.08]{a1}{b1} %
  \ncarc[arrows=->,linewidth=.02]{a1}{e1} %
  \ncarc[arrows=->,linestyle=dotted,linewidth=.02]{b1}{a2} %
  \ncarc[arrows=->,linewidth=.08]{b1}{f1} %
  \ncarc[arrows=->,linestyle=dotted,linewidth=.08]{a2}{b2} %
  \ncarc[arrows=->,linestyle=dotted,linewidth=.02]{a2}{h} %
\end{psmatrix}
\)

\bigskip

\textsf{dolAcBs}$_n$ aka \textbf{\Alice{} continues always and \Bob{} abandons always}





  \caption{Three strategy profiles}
  \label{fig:4_strat}
\end{figure}
}
\newcommand{\figzeroone}
{
  \begin{figure}[ht!]
    \centering \hspace*{-30pt} \(
    \begin{psmatrix}[colsep=20pt,rowsep=20pt] &
      {\ovalnode{a}{\alice}} 
      &{\ovalnode{b}{\bob}} %
      &[name=c]
      & [name=d] 
      &{\ovalnode{e}{\alice}}
      &{\ovalnode{f}{\bob}}
      & {\ovalnode{g}{\alice}}
      & [name=h] 
      & [name=i]
      &\\ 
      &[name=a1] {\scriptstyle 0,1} %
      &[name=b1] {\scriptstyle 1,0} %
      & [name=c1]\phantom{\scriptstyle 1,0} %
      &[name=d1] {\scriptstyle 0,1} %
      &[name=e1] {\scriptstyle 1,0} %
      &[name=f1] {\scriptstyle 0,1} %
      &[name=g1] \phantom{\scriptstyle 1, 0} %
      & [name=h1] %
      \ncarc[arrows=->]{a}{b}\Aput{\scriptstyle \mathsf{c}} %
      \ncarc[arrows=->,linestyle=dotted]{b}{c} \Aput{\scriptstyle \mathsf{c}} %
      \ncarc[arrows=->,linestyle=dotted]{d}{e} \Aput{\scriptstyle \mathsf{c}} %
      \ncarc[arrows=->]{e}{f} \Aput{\scriptstyle \mathsf{c}} %
      \ncarc[arrows=->]{f}{g} \Aput{\scriptstyle \mathsf{c}} %
      \ncarc[arrows=->,linestyle=dotted]{g}{h}\Aput{\scriptstyle \mathsf{c}} %
      \ncarc[arrows=->,linestyle=dotted]{h}{i} \Aput{\scriptstyle \mathsf{c}} %
       \ncarc[arrows=->]{a}{a1} \Aput{\scriptstyle \mathsf{s}}%
       \ncarc[arrows=->]{b}{b1} \Aput{\scriptstyle \mathsf{s}} %
      \ncarc[arrows=->,linestyle=dotted]{d}{d1} \Aput{\scriptstyle \mathsf{s}} %
      \ncarc[arrows=->]{e}{e1} \Aput{\scriptstyle \mathsf{s}} %
      \ncarc[arrows=->]{f}{f1} \Aput{\scriptstyle \mathsf{s}} %
      \ncarc[arrows=->]{g}{g1}\Aput{\scriptstyle \mathsf{s}}  %
      \ncarc[arrows=->,linestyle=dotted]{h}{h1} \Aput{\scriptstyle \mathsf{s}}%
    \end{psmatrix}
    \)
    \caption{The $0,1$ game.}
    \label{fig:zone}
  \end{figure}
}
\newcommand{\rosenthal}{
  \begin{figure}[ht]
\hspace*{-20pt}
    \(\begin{psmatrix}[colsep=15pt,rowsep=15pt]
  &{\ovalnode{un}{A}} &{\ovalnode{de}{B}} & {\ovalnode{tr}{A}} &{\ovalnode{qu}{B}}  
    &{\ovalnode{ci}{A}} &{\ovalnode{si}{B}} & {\ovalnode{se}{A}} &{\ovalnode{hu}{B}} 
   &{\ovalnode{ne}{A}} &{\ovalnode{di}{B}} & [name=on]{\scriptstyle (0,10)}
 &\\
  &[name=un1] {\scriptstyle (0,0)} 
  &[name=de1] {\scriptstyle (-1,3)} %
  &[name=tr1] {\scriptstyle (2,2)} 
  &[name=qu1] {\scriptstyle (1,5)} %
  &[name=ci1]  {\scriptstyle (4,4)} %
  &[name=si1]  {\scriptstyle (3,7)} %
  &[name=se1]  {\scriptstyle (6,6)} %
  &[name=hu1]  {\scriptstyle (5,9)} %
  &[name=ne1]  {\scriptstyle (8,8)} %
  &[name=di1]  {\scriptstyle (7,11)} %
  \ncarc[arrows=->,linewidth=.02]{un}{de} %
  \ncarc[arrows=->,linewidth=.02]{un}{un1} %
  \ncarc[arrows=->,linewidth=.02]{de}{tr} %
  \ncarc[arrows=->,linewidth=.02]{de}{de1} %
  \ncarc[arrows=->,linewidth=.02]{tr}{qu} %
  \ncarc[arrows=->,linewidth=.02]{tr}{tr1} %
  \ncarc[arrows=->,linewidth=.02]{qu}{ci} %
  \ncarc[arrows=->,linewidth=.02]{qu}{qu1} %
  \ncarc[arrows=->,linewidth=.02]{ci}{si} %
  \ncarc[arrows=->,linewidth=.02]{ci}{ci1} %
  \ncarc[arrows=->,linewidth=.02]{si}{se} %
  \ncarc[arrows=->,linewidth=.02]{si}{si1} %
  \ncarc[arrows=->,linewidth=.02]{se}{hu} %
  \ncarc[arrows=->,linewidth=.02]{se}{se1} %
  \ncarc[arrows=->,linewidth=.02]{hu}{ne} %
  \ncarc[arrows=->,linewidth=.02]{hu}{hu1} %
  \ncarc[arrows=->,linewidth=.02]{ne}{di} %
  \ncarc[arrows=->,linewidth=.02]{ne}{ne1} %
 \ncarc[arrows=->,linewidth=.02]{di}{on} %
  \ncarc[arrows=->,linewidth=.02]{di}{di1} %
\end{psmatrix}
\)
    \caption{The genuine Rosenthal game}
    \label{fig:rosen}
  \end{figure}
}
\newcommand{\infinipede}{
  \begin{figure}[ht]
    \centering
    \(\begin{psmatrix}[colsep=10pt,rowsep=20pt]
  &{\ovalnode{un}{A}} &{\ovalnode{de}{B}} & {\ovalnode{tr}{A}} &[name=qu]
    &[name=ci] &{\ovalnode{si}{A}} & {\ovalnode{se}{B}} &{\ovalnode{hu}{A}} 
   &[name=ne]&[name=di]
 &\\
  &[name=un1] {\scriptstyle \quad (0,0)\quad } 
  &[name=de1] {\scriptstyle \quad (0,3)\quad } %
  &[name=tr1] {\scriptstyle \quad (2,2)\quad } 
  &[name=qu1] {\scriptstyle \qquad} %
  &[name=ci1]  
  &[name=si1]  {\scriptscriptstyle \ (2n,2n)\ } %
  &[name=se1]  {\scriptscriptstyle (2n-1,2n+3)} %
  &[name=hu1]  {\scriptscriptstyle (2n+2,2n+2)} %
  &[name=ne1]  
  &[name=di1]  
  \ncarc[arrows=->,linewidth=.02]{un}{de} %
  \ncarc[arrows=->,linewidth=.02]{un}{un1} %
  \ncarc[arrows=->,linewidth=.02]{de}{tr} %
  \ncarc[arrows=->,linewidth=.02]{de}{de1} %
  \ncarc[arrows=->,linewidth=.02,linestyle=dotted]{tr}{qu} %
  \ncarc[arrows=->,linewidth=.02]{tr}{tr1} %
   \ncarc[arrows=->,linewidth=.02,linestyle=dotted]{ci}{si} %
  \ncarc[arrows=->,linewidth=.02]{si}{se} %
  \ncarc[arrows=->,linewidth=.02]{si}{si1} %
  \ncarc[arrows=->,linewidth=.02]{se}{hu} %
  \ncarc[arrows=->,linewidth=.02]{se}{se1} %
  \ncarc[arrows=->,linewidth=.02,linestyle=dotted]{hu}{ne} %
  \ncarc[arrows=->,linewidth=.02]{hu}{hu1} %
\end{psmatrix}
\)
    \caption{The infinipede}
    \label{fig:infinipede}
  \end{figure}
}
\newcommand{\infSGPE}{
  \begin{figure}[ht]
    \centering
    \(\begin{psmatrix}[colsep=10pt,rowsep=20pt]
  &{\ovalnode{un}{A}} &{\ovalnode{de}{B}} & {\ovalnode{tr}{A}} &[name=qu]
    &[name=ci] &{\ovalnode{si}{A}} & {\ovalnode{se}{B}} &{\ovalnode{hu}{A}} 
   &[name=ne]&[name=di]
 &\\
  &[name=un1] {\scriptstyle \quad (0,0)\quad } 
  &[name=de1] {\scriptstyle \quad (0,3)\quad } %
  &[name=tr1] {\scriptstyle \quad (2,2)\quad } 
  &[name=qu1] {\scriptstyle \qquad} %
  &[name=ci1]  
  &[name=si1]  {\scriptscriptstyle \ (2n,2n)\ } %
  &[name=se1]  {\scriptscriptstyle (2n-1,2n+3)} %
  &[name=hu1]  {\scriptscriptstyle (2n+2,2n+2)} %
  &[name=ne1]  
  &[name=di1]  
  \ncarc[arrows=->,linewidth=.02]{un}{de} %
  \ncarc[arrows=->,linewidth=.08]{un}{un1} %
  \ncarc[arrows=->,linewidth=.02]{de}{tr} %
  \ncarc[arrows=->,linewidth=.08]{de}{de1} %
  \ncarc[arrows=->,linewidth=.02,linestyle=dotted]{tr}{qu} %
  \ncarc[arrows=->,linewidth=.08]{tr}{tr1} %
   \ncarc[arrows=->,linewidth=.02,linestyle=dotted]{ci}{si} %
  \ncarc[arrows=->,linewidth=.02]{si}{se} %
  \ncarc[arrows=->,linewidth=.08]{si}{si1} %
  \ncarc[arrows=->,linewidth=.02]{se}{hu} %
  \ncarc[arrows=->,linewidth=.08]{se}{se1} %
  \ncarc[arrows=->,linewidth=.02,linestyle=dotted]{hu}{ne} %
  \ncarc[arrows=->,linewidth=.08]{hu}{hu1} %
\end{psmatrix}
\)
    \caption{Subgame perfect equilibrium for the infinipede}
    \label{fig:infSGPE}
  \end{figure}
}
\newcommand{\AliceBob}{
\begin{figure}[!ht]
\vspace*{5cm}
\hspace*{3cm}
\mbox{
\psset{xunit=.4cm}\psset{yunit=.4cm}
\pscircle[linewidth=.5pt,linecolor=gray](-.8,4){.1}
\pscircle[linewidth=.5pt,linecolor=gray](-1.5,5){.1}
\pscircle[linewidth=.5pt,linecolor=gray](-2,6){.1}
\pscircle[linewidth=.5pt,linecolor=gray](-2.5,7){.1}

\qquad \rput*(-1,9){\large \sf He bluffs!
\(
\begin{psmatrix}[colsep=8pt,rowsep=3pt]
  &{\ovalnode{a}{{\,}}} &{\ovalnode{b}{{\, }}} &
  {\ovalnode{a1}{{\, }}} &{\ovalnode{b1}{{\ }}}  & [name=a2] & [name=b2]
  &\\
  &[name=e] {\bullet}
  &[name=f] {\bullet}
  &[name=e1] {\bullet}
  &[name=f1] {\bullet}
  &[name=h] {\bullet}
  \ncarc[arrows=->,linewidth=.05]{a}{b} %
  \ncarc[arrows=->,linewidth=.02]{a}{e} %
  \ncarc[arrows=->,linewidth=.02]{b}{a1} %
  \ncarc[arrows=->,linewidth=.05]{b}{f} %
  \ncarc[arrows=->,linewidth=.05]{a1}{b1} %
  \ncarc[arrows=->,linewidth=.02]{a1}{e1} %
  \ncarc[arrows=->,linestyle=dotted,linewidth=.02]{b1}{a2} %
  \ncarc[arrows=->,linewidth=.05]{b1}{f1} %
  \ncarc[arrows=->,linestyle=dotted,linewidth=.05]{a2}{b2} %
  \ncarc[arrows=->,linestyle=dotted,linewidth=.02]{a2}{h} %
\end{psmatrix}
\)
}

\psellipse[linewidth=.2pt](1,0)(3,4)
\qdisk(0.1,1.7){1.5pt}
\parabola[linewidth=1pt](-.5, 1.7)(0, 1.55)
\parabola[linewidth=1pt](-.5, 1.7)(0, 1.85)
\parabola[linewidth=2pt](-.5, 2)(0, 2.15)
\qdisk(2.16,1.7){1.5pt}
\parabola[linewidth=1pt](1.5, 1.7)(2, 1.55)
\parabola[linewidth=1pt](1.5, 1.7)(2, 1.85)
\parabola[linewidth=2pt](1.5, 2)(2, 2.15)
\psarc[linewidth=1pt](1,0){.1}{260}{110}
\parabola(.1,-1.5)(1,-2)
\parabola(.1,-1.5)(1,-2.2)
\pscircle[linecolor=white,fillstyle=solid,fillcolor=lightgray](-.5,-.5){10pt}
\pscircle[linecolor=white,fillstyle=solid,fillcolor=lightgray](2.6,-.5){10pt}

\hspace*{1.2cm}

\psellipse[linewidth=.2pt](10, 0)(3, 3)
\parabola[linewidth=2pt](8.8, 1.7)(9, 1.6)
\parabola[linewidth=2pt](10.8, 1.7)(11, 1.6)
\parabola(8.6, -1.5)(9.5, -2)
\psarc(9.5, 0){.4}{30}{310}

\pscircle[linewidth=.5pt,linecolor=gray](12.8,3){.1}
\pscircle[linewidth=.5pt,linecolor=gray](13.1,4){.1}
\pscircle[linewidth=.5pt,linecolor=gray](13.4,5){.1}

\rput*(11,7){\large \sf She bluffs!
\(
\begin{psmatrix}[colsep=8pt,rowsep=3pt]
  &{\ovalnode{a}{{\,}}} &{\ovalnode{b}{{\, }}} &
  {\ovalnode{a1}{{\, }}} &{\ovalnode{b1}{{\ }}}  & [name=a2] & [name=b2]
  &\\
  &[name=e] {\bullet}
  &[name=f] {\bullet}
  &[name=e1] {\bullet}
  &[name=f1] {\bullet}
  &[name=h] {\bullet}
  \ncarc[arrows=->,linewidth=.05]{a}{b} %
  \ncarc[arrows=->,linewidth=.02]{a}{e} %
  \ncarc[arrows=->,linewidth=.02]{b}{a1} %
  \ncarc[arrows=->,linewidth=.05]{b}{f} %
  \ncarc[arrows=->,linewidth=.05]{a1}{b1} %
  \ncarc[arrows=->,linewidth=.02]{a1}{e1} %
  \ncarc[arrows=->,linestyle=dotted,linewidth=.02]{b1}{a2} %
  \ncarc[arrows=->,linewidth=.05]{b1}{f1} %
  \ncarc[arrows=->,linestyle=dotted,linewidth=.05]{a2}{b2} %
  \ncarc[arrows=->,linestyle=dotted,linewidth=.02]{a2}{h} %
\end{psmatrix}
\)
}
}

\vspace*{2.5cm}
\hspace*{2cm}
\hspace*{1.8cm} {\large \sf Alice} \hspace*{4cm} {\large \sf Bob}

  \caption{What Alice and Bob might think}
\label{fig:AliceBob}
\end{figure}
}
\begin{document} 
\newqsymbol{"|-|"}{\conva} 
\newqsymbol{"<<-a->>"}{\coconva}

\maketitle

\begin{abstract}
  The aim of this \bookarticle{} is to study infinite games and to prove formally properties
  in this framework.  In particular, we show that the behavior which leads to speculative
  crashes or escalation is  rational.  Indeed it proceeds logically  from the statement that
  resources are infinite. The reasoning is based on the concept of coinduction conceived
  by computer scientists to model infinite computations and used by rational agents
  unknowingly.  
\end{abstract}

\tableofcontents

\thispagestyle{empty}

\chapitre{Introduction}{Introduction} 

The aim of this\bookarticle{} is to study infinite games and to prove formally some
properties in this framework.  As a consequence, we show that the behavior (the madness)
of people which leads to speculative crashes or escalation can be proved rational.
Indeed it proceeds from the statement that resources are infinite. The reasoning is based
on the concept of coinduction conceived by computer scientists to model infinite
computations and used by rational agents unknowingly.  When used consciously, this concept
is not as simple as induction and we could paraphrase Newton \citep{bouchaud08:_econom}:
\emph{``Modeling the madness of people is more difficult than modeling the motion of
  planets''}.

In this \chaptersection{} we present the three words of the title, namely rationality,
escalation and  infiniteness.

\ssection{Rationality and escalation}{Rationality and escalation}

We consider the ability of agents to reason and to conduct their action according to a
line of reasoning.  We call this \emph{rationality}. This could have been called
\emph{wisdom} as this attributed to King Solomon.  It is not clear that agents act always
rationally.  If an agent acts always following a strict reasoning one says that he (she)
is rational.  To specify strictly this ability, one associates the agent with a mechanical
reasoning device, more specifically a Turing machine or a similar decision mechanism based
on abstract computations.  One admits that in making a decision the agent chooses the
option which is the better, that is no other will give better payoff, one says that this
option is an equilibrium in the sense of game theory.  A well-known game theory situation
where rationality of agents is questionable is the so-called \emph{escalation}. This is a
situation where there is a sequence of decisions wich can be infinite. If many
agents\footnote{Usually two are enough.} act one after the others in an infinite sequence
of decisions and if this sequence leads to situations which are worst and worst for the
agents, one speaks of \emph{escalation}.  One notices the emergence of a property of
complex systems, namely the behavior of the system is not the conjunction of this of all
the constituents.  Here the individual wisdom becomes a global madness.

\ssection{Infiniteness}{Infiniteness}
\label{sec:infiniteness}\index{infiniteness}

It is notorious that there is a wall between finiteness and infiniteness, a fact known to
model theorists like \citet{fagin93:_finit_model_theor_person_persp} \citet{EF-finite-mt}
and to specialists of  functions of real variable.  \citet{weierstrass72}  gave an
example of the fact that a finite sum of functions differentiable everywhere is
differentiable everywhere whereas an infinite sum is differentiable nowhere.  This
confusion between finite and infinite is at the origin of the conclusion of the
irrationality of  the escalation founded on  the belief
that a property of a infinite mathematical object can be extrapolated from a similar property of
finite approximations.\footnote{In the postface (\Chapsec~\ref{cha:postface}) we give
  another explanation: agents stipulate an a priori hypothesis that resources are finite
  and that therefore escalation is impossible.}
  As \citet{fagin93:_finit_model_theor_person_persp}
recalls, ``Most of the classical theorems of logic [for infinite structures] fail for
finite structures'' (see~\citet{EF-finite-mt} for a full development of the finite model
theory).  The reciprocal holds obviously: ``Most of the results which hold for finite
structures, fail for infinite structures''.  This has been beautifully evidenced in
mathematics, when \citet{weierstrass72} has exhibited his function:
\begin{displaymath} f(x)=\sum_{n=0}^\infty b^n\cos(a^n x \pi).
\end{displaymath} Every finite sum is differentiable and the limit, \ie the
\emph{infinite} sum, is not.  In another domain,  \citet{green08}\footnote{Toa won the
  Fields Medal for this work.} have
proved that the sequence of prime numbers contains arbitrarily long arithmetic
progressions.  By extrapolation, there would  exist an infinite
arithmetic progression of prime numbers, which is trivially not true. 
To give another picture, infinite games are to finite games
what fractal curves are to smooth curves \citep{edgar08:fract} and \citep[p.~174-175]{paulos03:_mathem_plays_stock_market}.  In game theory the error
done by the ninetieth century mathematicians \footnote{Weierstrass quotes some careless 
  mathematicians, namely Cauchy, Dirichlet and
  Gauss,  whereas Riemann was conscient of the problem.}  \index{Weierstrass}
would lead to the same issue.  With what we are concerned, a
result which holds on finite games does not hold necessarily on infinite games and
vice-versa.  More specifically equilibria on finite games are not preserved at the limit
on infinite games whereas new types of equilibria emerge on the infinite game not present
in the approximation (see the $0,1$ game in Section~\ref{sec:01}) and Section~\ref{sec:0-1-game}.
In particular, we cannot conclude that, whereas the only rational
attitude in finite dollar auction would be to stop immediately, it is irrational to
escalate in the case of an infinite auction.  We have to keep in mind that in the case of
escalation, the game is infinite, therefore reasoning made for finite objects are
inappropriate and tools specifically conceived for infinite objects should be adopted.
Like Weierstrass' discovery led to the development of function series, logicians have devised
methods for correct deductions on infinite structures. The right framework for
reasoning logically on infinite mathematical objects is called \emph{coinduction}\index{coinduction}.

The inadequate reasoning on infinite games is as follows: people study finite
approximation of infinite games as infinite games truncated at a finite location.  If they
obtain the same result on all the approximations, they extrapolate the result to the
infinite game as if the limit would have the same property.  But this says nothing since
the \emph{infiniteness is not the limit of finiteness}.  Instead of reexamining their
reasoning or considering carefully the hypotheses their reasoning is based upon, namely
wondering whether the set of resource is infinite, they conclude that humans are
irrational.  If there is an escalation, then the game is infinite, then \emph{the
  reasoning must be specific to infinite games, that is based on coinduction}.  This is
only on this basis that one can conclude that humans are rational or irrational.  In no
case, a property on the infinite game generated by escalation cane be extrapolated from
the same property on finite games. \index{escalation}

In this \bookarticle{} we address these issues.  The games we consider may have arbitrary long
histories as well as infinite histories.  \index{history} In our games there are two choices at each
node\footnote{Those choices are often to \emph{stop} or to \emph{continue}.},
this will not loose generality, since we can simulate finitely branching games in this
framework.  By K\"onig's lemma, finitely branching, specifically binary, infinite games
have at least an infinite history.  We are taking the problem of defining formally
infinite games, infinite strategy profiles, and infinite histories extremely seriously.
By ``seriously'' we mean that we prepare the land for precise, correct and rigorous reasoning.
For instance, an important issue which is not considered in the literature is how the utilities
associated with an infinite history are computed.  To be formal and rigorous, we expect
some kinds of recursive definitions, more precisely co-recursive definitions, but then
comes the questions of what the payoff associated with an infinite strategy profile is and
whether such a payoff exists (see Section~\ref{sec:why-infinite-plays}).

\ssection{Games}{Games}
\label{sec:games}

Finite extensive games are represented by finite trees and are analyzed through induction.
For instance, in finite extensive games, a concept like \emph{subgame perfect equilibrium}
is defined inductively and receives appropriately the name of \emph{backward induction}.
Similarly \emph{convertibility} (an agent changes choices in his strategy) has
also an inductive definition and this concept is a key for this of \emph{Nash
  equilibrium}.  But \emph{induction}, which has been designed for finitely based objects,
no more works on infinite\footnote{In this \booksecvirg \emph{infinite} means infinite and
  discrete.  For us, an infinite extensive game is discrete and has infinitely many nodes.
} games, \ie games underlying infinite trees.  Logicians have proposed a tool, which
they call \emph{coinduction}, to reason on infinite objects.  In short, since objects are
infinite and their construction cannot be analyzed, coinduction ``observes'' them, that is
looks at how they ``react'' to operations (see Section~\ref{sec:inf_obj} for more
explanation).  In this \booksecvirg we formalize with coinduction, the concept of infinite
game, of infinite strategy profile, of equilibrium in infinite games, of utility (payoff),
and of subgame.  We verify on the proof assistant \Coq{} that everything works smoothly
and yields interesting consequences.  Thanks to coinduction,  examples of apparently
paradoxical human behavior are explained logically, demonstrating a rational behavior.

Finite extensive games have been introduced by \cite{Kuhn:ExtGamesInfo53}.  But many
interesting extensive games are infinite and therefore the theory of infinite extensive
games play an important role in game theory, with examples like the \emph{dollar auction
  game} \index{Shubik}
\citep{Shubik:1971,colman99:_game_theor_and_its_applic,gintis00:_game_theor_evolv,osborne94:_cours_game_theory},
the generalized \emph{centipede game}\footnote{Here \emph{``generalized''} means that the
  game has an infinite ``backbone''.} or infinipede or the \emph{$0,1$ game}.  From a formal point of view, the concepts
associated with infinite extensive games are not appropriately treated in papers and
books. In particular, there is no clear notion of Nash equilibrium in infinite extensive
game and the gap between finiteness and infiniteness is not correctly understood.  For
instance in one of the textbooks on game theory, one finds the following definition of
\emph{games with finite horizon}:
\begin{it}
  If the length of the longest derivation is [...] finite, we say that the game has a
  finite horizon.
  Even a game with a finite horizon may have infinitely many terminal histories, because
  some player has infinitely many actions after some history.
\end{it}
\noindent
Notice that in an infinite game with infinite branching it is not always the case that a
longest derivation exists.  If a game has only finite histories, but has infinitely many
such finite histories of increasing length, there is no longest history.  Before giving a
formal definition later in the \booksecvirg let us say intuitively why this definition is
inconsistent.  Roughly speaking, a history is a path in the ordered tree which underlies
the game.  A counterexample 
is precisely when the
tree is infinitely branching \ie when ``some player has infinitely many
actions''.\footnote{In \Chapsec~\ref{sec:inf_stra}, we define a predicate \emph{leads to
    leaf} which we think to characterize properly the concept of \emph{finite horizon}
  which is a property of strategy profiles.}

Escalation takes place in specific sequential games in which players continue although
their payoff decreases on the whole.  The \emph{dollar auction} game \index{dollar auction} has been presented by
\citet{Shubik:1971} \index{Shubik} as the paradigm of escalation.  He noted that, even though their cost
(the opposite of the payoff) basically increases, players may keep bidding.  This attitude
was considered as inadequate and when talking about escalation, \citet{Shubik:1971} says
this is a paradox, \citet{oneill86:_inten_escal_and_dollar_auction} and
\citet{leininger89:_escal_and_coop_in_confl_situat} consider the bidders as irrational,
\citet{gintis00:_game_theor_evolv} speaks of \emph{illogic conflict of escalation} and
\citet{colman99:_game_theor_and_its_applic} calls it \textit{Macbeth effect} after
Shakespeare's play.  Rebutting these authors, we prove  in this \booksecvirg using a
reasoning conceived for infinite structures that escalation is logic and that agents are
rational, therefore this is not a paradox and we are led to assert that Macbeth is in
some way rational.

This escalation phenomenon occurs in infinite sequential games and only there.
\index{escalation}  To quote \citet{Shubik:1971}:
\begin{quotation}
  \begin{it}
    We could add an upper limit to the amount that anyone is
          allowed to bid.  However the analysis is confined to the (possibly infinite) game without
          a specific termination point, as no particularly interesting general phenomena appear if
          an upper bound is introduced.
  \end{it}
\end{quotation}
Therefore
it must be studied in infinite games with adequate tools, \ie in a framework designed for
mathematical infinite objects.  Like \citet{Shubik:1971} we will limit ourselves to two
players only.  In auctions, this consists in the two players bidding forever.  This
statement of rationality is based on the largely accepted assumption that a player is
rational if he adopts a strategy which corresponds to a \emph{subgame perfect
  equilibrium}.  To characterize this equilibrium most of the above cited authors consider a
finite restriction of the game for which they compute the subgame perfect equilibrium by
\emph{backward induction}\footnote{What is called ``backward induction'' in game theory is
  roughly what is called ``induction'' in logic.}.  Then they extrapolate the result
obtained on the amputated games to the infinite game.  To justify their practice, they add
a new hypothesis on the amount of money the bidders are ready to pay, called the
\emph{limited bankroll}.  By enforcing the finiteness of the game, they exclude clearly escalation.
In the amputated game dollar auction, they conclude that there
is a unique subgame perfect equilibrium.  This consists in both agents giving up
immediately, not starting the auction and adopting the same choice at each step.  In our
formalization in infinite games, we show that extending that case up to infinity is not a
subgame perfect equilibrium and we found two subgame perfect equilibria, namely the cases
when one agent continues at each step and the other leaves at each step.  Those equilibria
which correspond to rational attitudes account for the phenomenon of escalation.  Actually
this discrepancy between equilibrium in amputated games extrapolated to infinite extensions
and infinite games occurs in a much simpler game than the dollar auction namely the
$0,1$~game \index{$0,1$ game} which will be studied in this \booksecpoint

\ssection{Coinduction}{Coinduction}
\label{sec:coinduction} \index{coinduction}

Like induction, coinduction is based on a fixpoint, but whereas induction is based on the
least fixpoint, coinduction is based on the greatest fixpoint, for an ordering that we are not
going to describe here, since it would go beyond the scope of this \booksecpoint  Attached to
induction is the concept of inductive definition, which characterizes objects like finite
lists, finite trees, finite games, finite strategy profiles, etc.  Similarly attached to
coinduction is the concept of coinductive definition which characterizes streams (infinite
lists), infinite trees, infinite games, infinite strategy profiles etc.  An inductive
definition yields the least set that satisfies the definition and a coinductive definition
yields the greatest set that satisfies the definition.  Associated with these definitions
we have inference principles.  For induction there is the famous \emph{induction
  principle} used in backward induction.
On coinductively defined sets of objects there is a principle like induction principle
which uses the fact that the set satisfies the definition (proofs by case or by pattern)
and that it is the largest set with this property.  Since coinductive definitions allow us
building infinite objects, one can imagine constructing a specific category of objects
with ``loops'', like the infinite word $(abc)^{`w}$ (\ie $abcabcabc...$) \index{infinite
  word} which is made by repeating the sequence $abc$ infinitely many times.\footnote{The
  notation $`a^{`w}$ for an infinite repetition of the word $`a$ is classical.}
  Other
examples with trees are given in Section~\ref{sec:l-bin-tree}, with infinite games and
strategy profiles in \Chapsec~\ref{chap:games}.  Such an object is a fixpoint, this means
that it contains an object like itself. For instance $(abc)^{`w} = abc(abc)^{`w}$ contains
itself. We say that such an object is defined as a cofixpoint.  To prove a property $P$ on
a cofixpoint $\ob=f(\ob)$, one assumes $P$ holds on $\ob$ (the $\ob$ in $f(\ob)$),
considered as a sub-object of $\ob$.  If one can prove $P$ on the whole object (on
$f(\ob)$), then one has proved that $P$ holds on~$\ob$.  This is called the
\emph{coinduction principle} a concept which comes from \citet{DBLP:conf/tcs/Park81},
\citet{DBLP:journals/tcs/MilnerT91}, and \cite{aczel88:_non_well_found_sets} and was
introduced in the framework we are considering by \citet{DBLP:conf/types/Coquand93}.
\citet{DBLP:journals/toplas/Sangiorgi09} gives a good survey with a complete historical
account.  To be sure to not be entangled, it is advisable to use a proof assistant which
implements coinduction, to build and to check the proof.  Indeed reasoning with
coinduction is sometimes so counter-intuitive that the use of a proof assistant is not
only advisable but compulsory.  For instance, we were, at first, convinced that in the
dollar auction the strategy profile consisting in both agents stopping at every step was a
Nash equilibrium, like in the finite case, and only failing in proving it mechanically
convinced us of the contrary and we were able to prove the opposite, namely that the
strategy profile ``stopping at every step'' is not a Nash equilibrium.  In the examples of
\Chapsec~\ref{chap:case_studies}, we have checked every statement using \Coq{} and in what
follows a sentence like ``we have proved that ...''  means that we have succeeded in
building a formal proof in \Coq.

\sssection{Backward coinduction and invariants}{Backward coinduction as a method for proving invariants}
\label{sec:backw-coinduct-as}

In infinite strategy profiles, the coinduction principles can be seen as follows:
a~property which holds on a strategy profile of an infinite extensive game is an
\emph{invariant}, \ie a property which is always true, along the histories and to prove
that this is an invariant one proceeds back to the past.  Therefore the name
\emph{backward coinduction} is appropriate, since it proceeds backward the histories, from
future to past.

\sssection{Backward induction vs backward coinduction}{Backward induction vs backward coinduction}
\label{sec:backw-induct-vs}

One may wonder the difference between the classical method, we call \emph{backward
  induction} and the new method we call \emph{backward coinduction}.  The main difference
is that backward induction starts the reasoning from the leaves, works only on finite
games and does not work on infinite games (or on finite strategy profiles), because it
requires a well-foundedness to work properly, whereas \emph{backward coinduction} works on
infinite games (or on infinite strategy profiles).  Coinduction is unavoidable on infinite
games, since the methods that consists in ``cutting the tail'' and extrapolating the
result from finite games or finite strategies profile to infinite games or infinite
strategy profiles cannot solve the problem or even approximate it.  It is indeed the same
erroneous reasoning as this of the predecessors of Weierstrass who concluded that since:
\index{Weierstrass}
\begin{displaymath}
`A p`:\nat,  f_p(x)=\sum_{n=0}^p b^n\cos(a^n x \pi),
\end{displaymath}
is differentiable everywhere then
\begin{displaymath}
f(x)=\sum_{n=0}^\infty b^n\cos(a^n x \pi).
\end{displaymath}
is differentiable everywhere whereas $f(x)$ is differentiable nowhere. 

Much earlier, during the IV$^{th}$ century BC, the improper use of inductive reasoning
allowed Parmenides and Zeno to negate motion and lead to \emph{Zeno's paradox of Achilles
  and the tortoise}.  This paradox was reported by Aristotle as follows: \index{Achilles
  and the tortoise}
\begin{quotation}
 \emph{``In a race, the quickest runner can never overtake the slowest, since the pursuer must first reach the point whence the pursued started, so
   that the slower must always hold a lead.''}

\hfill\emph{Aristotle}, Physics VI:9, 239b15
\end{quotation}
In Zeno's framework, Zeno's reasoning is correct, because by induction, one can prove that
Achilles will never overtake the tortoise.  Indeed this applies to the infinite sequence
of races described by Aristotle.  In each race of the sequence, the pursuer starts from
where the pursued started previously the race and the pursuer ends where the pursued started
in the current race.  By induction one can prove, but only for a sequence of races, the
truth of the statement ``Achilles will never overtake the tortoise''.  In each race
``Achilles does not overtake the tortoise''.  For the infinite race for which coinduction
would be needed, the result ``Achilles overtakes the tortoise'' holds.  By the way,
experience tells us that Achilles would overtake the tortoise in a real race and Zeno has
long been refuted by the real world.

\sssection{Von Neumann and coinduction}{Von Neumann and coinduction}
\label{sec:von-neumann}

As one knows, von Neumann
\citep{neumann28:_zur_theor_gesel,neumann44:_theor_games_econom_behav} is the creator of
game theory, whereas extensive games and equilibrium in non cooperative games are due to
\cite{Kuhn:ExtGamesInfo53} and \cite{Nash50}.  In the spirit of their creators all those
games are finite and backward induction is the basic principle for computing subgame
perfect equilibria \citep{selten65:_spiel_behan_eines_oligop_mit}.  This is not surprising
since \cite{neumann25:_axiom_mengen} is also at the origin of the role of well-foundedness
in set theory despite he left a door open for a not well-founded membership relation.  As
explained by \cite{DBLP:journals/toplas/Sangiorgi09}, research on anti-foundation
initiated by \cite{mirimanoff17:_les_antim_de_russel_et} are at the origin of coinduction
and were not well known until the work of \cite{aczel88:_non_well_found_sets}.

\sssection{Proof assistants vs automated theorem provers}{Proof assistants vs automated theorem provers}

\Coq{} is a proof assistant built by \emph{\citet{Coq:manual}},
see~\citet{BertotCasterant04} for a good introduction and notice that they call it
``interactive theorem provers'', which is a strict synonymous.  Despite both deal with
theorems and their proofs and are mechanized using a computer, proof assistants are not
automated theorem provers. In particular, they are much more expressive than automated
theorem provers and this is the reason why they are interactive.  For instance, there is
no automated theorem prover implementing coinduction.  Proof assistants are automated only
for elementary steps and interactive for the rest.  A specificity of a proof assistant is
that it builds a mathematical object called a (formal) proof which can be checked
independently, copied, stored and exchanged.  Following \cite{harrison-notices} and
\cite{dowek07:_les_metam_du_calcul}, we can consider that they are the tools of the
mathematicians of the XXI${^\textrm{th}}$ century.  Therefore using a proof assistant is a
highly mathematical modern activity.

The mathematical development presented here corresponds to a \Coq{} script\footnote{A \emph{script} is a list of commands of a proof assistant.} which can be found on the following url's:

\centerline{\url{http://perso.ens-lyon.fr/pierre.lescanne/COQ/Book/}}

\centerline{\url{http://perso.ens-lyon.fr/pierre.lescanne/COQ/Book/SCRIPTS/}}

\medskip

\sssection{Induction vs coinduction}{Induction vs coinduction}

To formalize structured finite objects, like finite games, one uses \emph{induction}, \ie
\begin{itemize}
\item a definition of basic objects
  \begin{itemize}
  \item in the case of natural numbers, induction provides an operator $0$ to build a natural
   number out of nothing,
 \item in the case of binary trees, a tree with non node, written $`()$,
 \item in the case of finite games induction provides an operator $\ogame\_\fgame$ to
   build a game out of nothing using an function that attribute payoffs to agents.
  \end{itemize}
and
\item a definition of the way to build new objects  
\begin{itemize}
  \item in the case of natural numbers, induction provides an operation \emph{successor}
   to build a natural number forma a natural number,
 \item in the case of binary trees, induction provides a binary operator which builds a
   tree with two trees,
  \item in the case of finite games induction provides an operator to build a game out of
    two subgames and a node.
  \end{itemize}
\end{itemize}
In the case of infinite objects like infinite games, one characterizes infinite objects
not by their construction, but by their behavior.  This characterization by
``observation'' is called \emph{coinduction}.  Coinduction is associated with the greatest
fixpoint.  The proof assistant \Coq{} offers a framework for coinductive definitions and
reasonings which are keys of our formalization.

\sssection{Acknowledgments}{Acknowledgments}

My research on game theory started during a visit at JAIST invited by Ren\'e Vestergaard,
then was continued by fruitful with St\'{e}phane Le~Roux and Franck Delaplace.  A decisive
step on infinite games and Nash Equilibrium was done by Matthieu Perrinel
\citep{LescannePerrinel}.  I thank all them.

\chapitre{The concepts through examples}{The concepts through examples}
\label{cha:conc-thro-exampl}

We think that examples are the best way to present concepts. In this \chaptersection{} we present a
simple example of an infinite game useful in what follows and two examples of structures
meant to introduce smoothly induction.


\ssection{A paradigmatic example: the  ${0, 1}$~game}{A paradigmatic example: the  ${0, 1}$~game}
\label{sec:01}\index{$0,1$ game}

Classically, an extensive game is considered a labelled oriented tree, in which both nodes
and arcs are labelled.  In other words, there is a set of nodes and a set of arcs.  An arc
connects a node to another node in such a way that there is no circuit in the graph, \ie
no path which goes from a node to itself when following the arcs.  An \emph{internal node}
is a node which is connected to another node and an \emph{external node} or a \emph{leaf}
is a node which is not connected to another.  Internal nodes are labelled by names of
players and represents a turn in the game and the label of the internal node tells us
which player has the turn.  External nodes represent the end of the game and are labelled
by the function that assigns a payoff\footnote{or a cost in some cases.} to each player.
Arcs are labeled by choices, more precisely choices made by the player who has the turn
and show how the choice made by the player who has the turn leads in another position in
the game.  In our formalization we assume that one can go to a position in which the same
player has the turn, like

\[\begin{psmatrix}[colsep=20pt,rowsep=20pt] 
  & {\ovalnode{aa}{\scriptstyle \Alice}} &{\ovalnode{ab}{\scriptstyle \Alice}}
      &[name=ac]\\
      &[name=aad] {\scriptstyle x_1,y_1} &[name=abd] {\scriptstyle x_2,y_2} %
      & 
      \ncarc[arrows=->]{aa}{ab}\Aput{\scriptstyle \mathsf{c}} %
      \ncarc[arrows=->]{ab}{ac} \Aput{\scriptstyle \mathsf{c}} %
     \ncarc[arrows=->]{aa}{aad} \Aput{\scriptstyle \mathsf{s}} %
      \ncarc[arrows=->]{ab}{abd} \Aput{\scriptstyle \mathsf{s}} %
\end{psmatrix}
\]
but this situation never occurs in examples we consider.  In this \booksecvirg we propose a
presentation of game less descriptive and more structural in the line of what is done in
computer science when describing infinite computations. 

\figzeroone

To illustrate this description we propose a running example of an infinite extensive game
which we call the \emph{$0, 1$ game} (which formal name is \zone{}) because the only
utilities (or payoffs) are $0$ and $1$.  The game is infinite and is represented by a kind
of infinite backbone (see Figure~\ref{fig:zone}) in which each internal node is connected
to a leaf and to another internal node.  We assume that there are two players, namely
\Alice{} and \Bob, hence two labels on the internal nodes.  We also assume that players
play one after the other.  They have two choices, \textsf{s}~for \textsf{stop} and
\textsf{c} for \textsf{continue}.  The arc labeled \textsf{s} is connected to a leaf and the arc
labeled \textsf{c} is connected to another internal node.  The leaves have labels that are
payoff functions.  The leaf connected with an internal node labeled with \Alice{} is
labeled with function $\{\Alice "|->" 0, \Bob "|->" 1\}$ (meaning $\Alice$'s payoff is $0$
and $\Bob$'s payoff is~$1$) whereas the leaf connected with an internal node labeled with
$\Bob$ is labeled with function $\{\Alice "|->" 1, \Bob "|->" 0\}$ (meaning the payoffs
are reversed).  The \emph{$0, 1$ game} is a specific case of a binary game which are
presented using the formalism for defining infinite objects coinductively.  Binary games
have two kinds of labels called $\lft$ and $\rgt$.  In the \emph{0, 1} game $\lft$ stands for
\textsf{c} (continues) and $\rgt$ stands for \textsf{s} (stops).  In what follows the $0, 1$
game will be formally defined as a coinductive structure defined by a fixpoint, that is

\[
\raisebox{20pt}{\zone \quad= }
\begin{psmatrix}[colsep=20pt,rowsep=20pt] 
  & {\ovalnode{aa}{\scriptstyle \Alice}} &{\ovalnode{ab}{\scriptstyle \Bob}}
      &\rnode{ac}{\qquad \zone}\\
      &[name=aad] {\scriptstyle 0,1} &[name=abd] {\scriptstyle 1,0} %
      & 
      \ncarc[arrows=->]{aa}{ab}\Aput{\scriptstyle \mathsf{c}} %
      \ncarc[arrows=->]{ab}{ac} \Aput{\scriptstyle \mathsf{c}} %
     \ncarc[arrows=->]{aa}{aad} \Aput{\scriptstyle \mathsf{s}} %
      \ncarc[arrows=->]{ab}{abd} \Aput{\scriptstyle \mathsf{s}} %
\end{psmatrix}
\]

\ssection{Coinduction}{Coinduction through examples}
\label{sec:inf_obj}

We now leave the descriptive approach for the structural approach.
We introduce the concept of coinduction through two examples: histories of
sequential games, \ie the sequences of choices performed by agents along the run of a
game according to a strategy profile, and binary trees, trees in which there are two
subtrees at each node.

\sssection{Histories}{Histories}

Infinite objects have peculiar behaviors. To start with a simple example, let us have a
look at \emph{histories} in games \citep[Chap. 5]{osborne04a}. In a game, agents make
\emph{choices}.  In an infinite game, agents can make finitely many choices before ending,
if they reach a terminal node, or infinitely many choices, if they run forever. Choices
are recorded in a \emph{history} in both cases.  A history is therefore a finite or an
infinite list of choices.  In this \booksecvirg we consider that there are two possible
choices: \lft~and \textsf{r} (\lft{} for ``left'' and \textsf{r} for ``right'').  Since a
history is a potentially infinite object, it cannot be defined by structural
induction.\footnote{In type theory, a type of objects defined by induction is called an
  \coqdockw{Inductive}, a shorthand sWe can for \emph{inductive type}.}  On the contrary, the
type\footnote{Since we are in type theory, the basic concept is \emph{``type''}. Since we
  are using only a small part of type theory, it would not hurt to assimilate naive types
  with naive sets.} \emph{History} has to be defined as a \coqdockw{CoInductive}, \ie
\index{coinduction}by coinduction, that is a mechanism which defines infinite objects and
allows to reason on them.  Let us use the symbol~$[~]$ for the empty history and
the binary operator~$::$ for non empty histories.  When we write $a::h$ we mean that the
history starts with $a$ and follows with the history $h$.  For instance, the finite
history $\lft\rgt\lft$ can be written $\lft::(\rgt::(\lft::[\ ]))$.  If $h$ is the
history $\lft^{`w}$ (an infinite sequence of $\lft$'s) $\lft:: h$ or $\lft::\lft^{`w}$ is
the history that starts with $\lft$ and follows with infinitely many $\lft$'s.  The reader
recognizes that $\lft::\lft^{`w}$ is $\lft^{`w}$ itself.
To define histories
coinductively we say the following:
\begin{quotation}
  A \textbf{coinductive} history (or a finite or infinite history) is
  \begin{itemize}
  \item either the empty history $[\ ]$,
  \item or a history of the form $a::h$, where $a$ is a choice and $h$ is  a history.
  \end{itemize}
\end{quotation}
The word ``coinductive'' says that we are talking about finite or infinite objects.  This
should not be mixed up with finite histories which will be defined inductively as
follows:\pagebreak[3]
\begin{quotation}
  An \textbf{inductive} history (or a finite history) is built as
  \begin{itemize}
  \item either the empty history $[\ ]$,
  \item or a finite non empty history which is the composition of a choice $a$ with a
    finite history $h_f$ to make the finite history ${a:: h_f}$.
  \end{itemize}
\end{quotation}
Notice the use of the participial ``built'', since in the case of induction, we say how
objects are built, because they are built finitely.  The $0,1$ game has the family of
histories $\mathsf{c}^*\mathsf{s} \cup \mathsf{c}^{`w}$, meaning that a history is either a sequence of $\mathsf{c}$'s followed
by a \textsf{s}, or an infinite sequence of $\mathsf{c}$'s.   Consider now an arbitrary
infinite binary game\footnote{An game which does not have a centipede structure, \ie which
  does not have a backbone.} with histories made of \lft's and \rgt's.
Let us now consider four families of
histories:

\medskip

\begin{tabular}[h]{|l|l|}
  \hline
  $\H_0$ & {\scriptsize The family of finite histories}\\
  $\H_1$ & {\scriptsize The family of finite histories or of histories which end with an infinite sequence of \lft's}\\
  $\H_2$ & {\scriptsize The family of finite histories or infinite histories which contain infinitely many \lft's}\\
  $\H_{\infty}$ & {\scriptsize The family of finite or infinite histories}\\
  \hline
\end{tabular}

\medskip

We notice that $\H_0\subset\H_1\subset\H_2\subset\H_{\infty}$.  If $\H$ is a set of histories, we write $\lft::\H$ the set $\{h \in \H_{\infty} \mid \exists h' \in \H, h = c :: h'\}$.
We notice that $\H_0$, $\H_1$, $\H_2$ and $\H_{\infty}$ are solutions of the fixpoint equation :
\[\H = \{[~]\} \ \cup\ \lft::\H \ \cup\ \textsf{r}::\H.\]
in other words
\begin{eqnarray*}
  \H_0 &=& \{[~]\} \ \cup\ \lft::\H_0 \ \cup\ \textsf{r}::\H_0\\
  \H_1 &=& \{[~]\} \ \cup\ \lft::\H_1 \ \cup\ \textsf{r}::\H_1\\
  \H_2 &=& \{[~]\} \ \cup\ \lft::\H_2 \ \cup\ \textsf{r}::\H_2\\
  \H_{\infty} &=& \{[~]\} \ \cup\ \lft::\H_\infty \ \cup\ \textsf{r}::\H_{\infty}
\end{eqnarray*}
Among all the fixpoints of the above equation, $\H_0$ is the least fixpoint and describes the inductive type associated with this equation, that is the type of the finite histories and
$\H_\infty$ is the greatest fixpoint and describes the coinductive type associated with this equation, that is the type of the infinite and infinite histories.  The principle that says
that given an equation, the least fixpoint is the inductive type associated with this equation and the greatest fixpoint is the coinductive type associated with this equation is very
general and will be used all along this \bookarticlepoint 

The \Coq{} vernacular,  is more verbose, but also more precise in describing the
\textsf{CoInductive} type \emph{History}, (see Appendix~\ref{sec:excerpts-coq-devel} for a precise
definition).   The word
\textbf{coinductive} guarantees that we define actually infinite objects and attach to the
objects of type \emph{History} a specific form of reasoning, called \emph{coinduction}.
In coinduction, we assume that we ``know'' an infinite object by observing it through its
definition, which is done by a kind of peeling.  Since on infinite objects there is no
concept of being smaller, one does not reason by saying ``I~know that the property holds
on smaller objects let us prove it on the object''.  On the contrary one says ``Let us
prove a property on an infinite object.  For that peel the object, assume that the property
holds on the peeled object and prove that it holds on the whole object''.  One does not
say that the object is smaller, just that the property holds on the peeled object.  The
above presentation is completely informal, but it has been, formally founded by Christine
Paulin in the theory
of \Coq, after the pioneer works of \cite{DBLP:conf/tcs/Park81} and \cite{Milner89}, using
the concept of greatest fixpoint in type theory \citep{DBLP:conf/types/Coquand93} (see
\cite{DBLP:journals/toplas/Sangiorgi09} for a survey).  \cite{BertotCasterant04} present
the concepts in \Chapsec~13 of their book.

\ssssection{Bisimilarity}

By just observing them, one cannot prove that two objects which have exactly the same
behavior are equal, we can just say that they are observably equivalent.  Observable
equivalence is a relation weaker than equality\footnote{We are talking here about
  \emph{Leibniz equality}, not about \emph{extensional equality} see
  appendix~\ref{sec:eq}.}, called \emph{bisimilarity} and defined on \emph{History} as a
\textsf{CoInductive} (see Appendix~\ref{sec:excerpts-coq-devel} for a fully formal
definition in the \Coq{} vernacular):

\begin{quotation}
  Bisimilarity $\hbis$ on \emph{histories} is defined \textbf{coinductively} as follows:
  \begin{itemize}
  \item $[\ ] \hbis [\ ]$,
  \item $h \hbis h'$ implies $\forall a:Agent, a :: h \hbis a::h'$.
  \end{itemize}
\end{quotation}

This means that two histories are bisimilar if either both are null or for composed
histories, if both have the same head and the rests of both histories are
bisimilar.\footnote{Bisimilarity is related with \emph{p-morphisms} and \emph{zigzag
    relations} in modal logic. See \citep{DBLP:journals/toplas/Sangiorgi09} for a survey.}
One can prove that two objects that are equal are bisimilar, but not the other way around,
because for two objects to be equivalent by observation, does not mean that they have the
same identity. To illustrate the difference between bisimilarity and equality of infinite
objects let us consider for example two infinite histories $`a_0$ and $`b_0$ that are
obtained as solutions of two equations.  Let $`a_n = c(n):: `a_{n+1}$, where $c(n)$ is
$(\textbf{if}~even(n)~\textbf{then}~\lft~\textbf{else}~\textsf{r})$, and $`b_p$ is $\lft
:: \textsf{r} :: `b_{p+1}$.  We know that if we ask for the $5^{th}$ element of $`a_0$ and
$`b_0$ we will get $\lft$ in both cases, and the $2p^{th}$ element will be $\textsf{r}$ in
both cases, but we have no way to prove that $`a_0$ and $`b_0$ are equal, i.e., have
exactly the same structure.  Actually the picture in Figure~\ref{fig:a0b0} shows that they
look different and there is no hope to prove by induction, for instance, that they are the
same, since they are not well-founded.  We see that in the first history, for all $p$, we
have $`a_0 = `a_{2p+1}$ and, we can see $`a_0$ as fixpoint of the system of equations:
\[x_{`a} = \lft :: x_{`a'} \qquad \qquad x_{`a'} = \rgt :: x_{`a}\] and for the second
history, for all $p$ we have also $`b_0= `b_{2p+1}$ and we can see $`b_0$ as the fixpoint of the equation
\[y_{`b} = \lft :: \rgt :: y_{`b}.\]

\begin{figure}[!tbh]
  \centering
  \doublebox{
  \qquad\parbox{.5\textwidth}{
    \vspace*{.7cm}
    \psframe[linewidth=.5pt,fillstyle=solid](0,0)(.5,.5)
    \psframe[linewidth=.5pt,fillstyle=solid](.5,0)(1,.5) %
    \psframe[linewidth=.5pt,fillstyle=solid](1,0)(1.5,.5) %
    \psframe[linewidth=.5pt,fillstyle=solid](1.5,0)(2,.5) %
    \rput(.25,.25){\tiny \lft} %
    \rput(.75,.25){\tiny \rgt} %
    \rput(1.25,.25){\tiny \lft} %
    \rput(1.75,.25){\tiny \rgt} %
    \psline[linestyle=dotted](2,.5)(2.5,.5) %
    \psline[linestyle=dotted](2,0)(2.5,0) %
    \psline[linestyle=dotted](3,.5)(3.5,.5) \psline[linestyle=dotted](3,0)(3.5,0) %
    \psframe[linewidth=.5pt,fillstyle=solid](3.5,0)(4,.5)
    \psframe[linewidth=.5pt,fillstyle=solid](4,0)(4.5,.5) %
    \rput(3.75,.25){\tiny \lft} \rput(4.25,.25){\tiny \rgt}
    \psline[linestyle=dotted](4.5,.5)(5.25,.5)
    \psline[linestyle=dotted](4.5,0)(5.25,0) %
    \\\hspace*{2.25cm} $`a_0$

    \bigskip\bigskip

    \hspace*{-2.25cm}
    \psframe[linewidth=.5pt,fillstyle=solid](2,0)(2.5,.5)%
    \psframe[linewidth=.5pt,fillstyle=solid](2.5,0)(3,.5) \rput(2.25,.25){\tiny \lft}
    \rput(2.75,.25){\tiny \rgt} \psline(3,.25)(3.5,.25) %
    \psframe[linewidth=.5pt,fillstyle=solid](3.5,0)(4,.5)%
    \psframe[linewidth=.5pt,fillstyle=solid](4,0)(4.5,.5) \rput(3.75,.25){\tiny \lft}
    \rput(4.25,.25){\tiny \rgt} \psline(4.5,.25)(5,.25) %
     \psline[linestyle=dotted](5,.25)(5.5,.25)
    \psline[linestyle=dotted](6,.25)(6.5,.25)
    \psframe[linewidth=.5pt,fillstyle=solid](6.5,0)(7,.5)%
    \psframe[linewidth=.5pt,fillstyle=solid](7,0)(7.6,.5) \rput(6.75,.25){\tiny \lft}
    \rput(7.3,.25){\tiny \rgt} \psline[linestyle=dotted](7.6,.25)(8,.25)
 \\\hspace*{2.5cm} $`b_0$
  }
}
  \caption{The picture of two bisimilar histories}
  \label{fig:a0b0}
\end{figure}

As we said, we do not consider equality among infinite objects, but only bisimilarity.
Why?  The reason is that with the kind of reasoning we use, we can only prove that two
objects ar bisimilar, not that they are equal.\footnote{We can consider a definition of infinite
object where objects are equal if they have the same elements, but such a set of objects is
obtained by quotient of the set of histories  by the bisimilarity relation.   This way, we
loose the structure of the infinite objects as we described them. For us it is important to
keep the structure of the objects.}

\ssssection{Always}

A property $P$ can be true on an infinite history.  For instance \emph{``there exists
  an~\lft{} in the sequence''}.  But we can also say that a property of a history is
always true, that is true for all the sub-histories of the history.  For instance,
\emph{``there is always  an \lft{} further in the sequence''}, this also means
\emph{``there exists infinitely many \lft's in the history''}.  The operator which
transforms a property~$P$ in a property \emph{always}~$P$ is called a \emph{modality}.
The modality \emph{always} is written~$\Box$ and we write $\Box P$ instead of
\emph{always}~$P$.

\sssection{Binary  trees}{Infinite or finite binary  trees}
\label{sec:l-bin-tree} \index{binary tree} \index{infinite binary tree}

As an example of a coinductive definition consider \emph{binary tree}, \ie the type of finite and infinite binary
trees.

\begin{figure}[thb]
  \begin{center}
    $`()$\qquad
    \begin{psmatrix}[rowsep=15pt,colsep=5pt]
      &&[name=a]{$\bullet$} & \\
      &[name=b]{$`()$} && [name=c]{$`()$} %
      \ncline{a}{b} \ncline{a}{c}
    \end{psmatrix}\qquad
    \begin{psmatrix}[rowsep=15pt,colsep=5pt]
      &&&[name=a]{$\bullet$} & \\
      &&[name=b]{$\bullet$} && [name=c]{$`()$} \\%
      &[name=d]{$`()$}&&[name=e]{$`()$} 
      \ncline{a}{b} \ncline{a}{c} \ncline{b}{d} \ncline{b}{e}
    \end{psmatrix}
    \qquad \raisebox{30pt}{$\ldots$}
    \begin{psmatrix}[rowsep=15pt,colsep=5pt]
      &&&&&&&[name=a]{$\bullet$} \\
      &&&&[name=b]{$\bullet$} &&&&& [name=c]{$\bullet$} \\%
      &&&[name=d]{$`()$}&&[name=e]{$`()$}
      &&&[name=f]{$\bullet$}&&[name=g]{$`()$} \\
      &&&&&&&[name=h]&&[name=i] \\ %
      &&&&&&[name=j]&&[name=k] %
      \ncline{a}{b} \ncline{a}{c} %
      \ncline{b}{d} \ncline{b}{e} \ncline{c}{f} \ncline{c}{g} %
      \rouge{\ncline[linestyle=dotted,linecolor=red]{f}{h}} \ncline[linestyle=dotted,linecolor=red]{f}{i}
      \rouge{\ncline[linestyle=dotted,linecolor=red]{h}{j}} \ncline[linestyle=dotted,linecolor=red]{h}{k}
    \end{psmatrix}
  \end{center}

  \medskip

  \begin{center}
    \begin{psmatrix}[rowsep=6pt,colsep=1.8pt]
      &&&&&&&[name=a]{$\bullet$} \\
      &&&&&&[name=b]{$\bullet$} &&&&& [name=c]{$`()$} \\%
      &&&&&[name=d]{$\bullet$}&&&&&[name=e]{$`()$} \\
      &&&&[name=f]{$\bullet$}&&&&&[name=g]{$`()$} \\
      &&&[name=h]{$\bullet$}&&&&&[name=i]{$`()$} \\ %
      &&[name=j]{$\bullet$}&&&&&[name=k]{$`()$}\\ %
      &[name=l] %
      \ncline{a}{b} \ncline{a}{c} %
      \ncline{b}{d} \ncline{b}{e} %
      \ncline{d}{f} \ncline{d}{g} %
      \ncline{f}{h} \ncline{f}{i} %
      \ncline{h}{j} \ncline{h}{k} %
      \ncline[linestyle=dotted,linecolor=red]{j}{l}
    \end{psmatrix}
    \qquad\raisebox{30pt}{$\ldots$}\qquad
    \begin{psmatrix}[rowsep=8pt,colsep=1.8pt]
      &&&&[name=a]{$\bullet$} \\
      &[name=b]{$`()$} &&&&& [name=c]{$\bullet$} \\%
      &&&&[name=d]{$\bullet$}&&&&&[name=e]{$`()$} \\
      &[name=f]{$`()$}&&&&&[name=g]{$\bullet$} \\
      &&&&[name=h]{$\bullet$}&&&&&[name=i]{$`()$} \\ %
      &[name=j]{$`()$}&&&&&[name=k]\\ %
      &&&[name=l] \ncline{a}{b} \ncline{a}{c} %
      \ncline{c}{d} \ncline{c}{e} \ncline{d}{f} \ncline{d}{g} \ncline{g}{h} \ncline{g}{i} \ncline{h}{j} \ncline[linestyle=dotted,linecolor=red]{h}{k}
      \ncline[linestyle=dotted,linecolor=red]{k}{l}
    \end{psmatrix}
  \end{center}

  \begin{center}
    \textsf{Backbone} \qquad \qquad \qquad \textsf{Zig}
  \end{center}
  \caption{Coinductive binary trees}
  \label{fig:lazy}
\end{figure}
\medskip

\begin{quotation}
  A \textsf{coinductive} \emph{binary tree} (or a lazy binary tree or a finite-infinite binary tree) is
 \ifElsev \begin{itemize}[$\star$] \else \begin{itemize} \fi
  \item either the empty binary tree $`()$,
  \item or a binary tree of the form $t `. t'$, where $t$ and $t'$ are binary trees.
  \end{itemize}
\end{quotation}

By the keyword \textsf{coinductive} we mean that we define a coinductive set of objects, hence we accept infinite objects.  Some coinductive binary trees are
given on Figure~\ref{fig:lazy}.  We define on coinductive binary trees a \emph{predicate} which has also a coinductive definition:

\begin{quotation}
  A binary tree is \emph{infinite} if (coinductively)
\ifElsev \begin{itemize}[$\star$] \else \begin{itemize} \fi 
  \item either its left subtree is \emph{infinite}
  \item or its right subtree is \emph{infinite}.
  \end{itemize}
\end{quotation}
Can we speak about a specific infinite tree?  Yes provided we can define it.  This can be done
as a fixpoint, actually a cofixpoint since we speak about an infinite object.  Let
us define an infinite binary tree with an infinite path that goes left, then right, then
left, then right, then left, forever (Figure~\ref{fig:lazy}).  We call \textsf{zig} this
infinite tree.  Its definition goes with another infinite tree called \textsf{zag}.

We define two trees that we call \textsf{zig} and \textsf{zag}.
\begin{quotation}
  \textsf{zig} and \textsf{zag} are defined together as \emph{cofixpoint}s as follows:
\ifElsev \begin{itemize}[$\star$] \else \begin{itemize} \fi
  \item \textsf{zig} has  $`()$ as left subtree and \textsf{zag} as right subtree,
  \item \textsf{zag} has \textsf{zig} as left subtree and $`()$  as right subtree.
  \end{itemize}
\end{quotation}
This says that \textsf{zig} and \textsf{zag} are the greatest solutions\footnote{In this case, the least solutions are uninteresting as they are
  objects nowhere defined. Indeed there is no basic case in the inductive definition.} of the two simultaneous equations:
\begin{eqnarray*}
  \mathsf{zig} &=& `()  `.  \mathsf{zag}\\
  \mathsf{zag} &=& \mathsf{zig}  `. `()
\end{eqnarray*}

\ziginf{}

It is common sense that \textsf{zig} and \textsf{zag} are infinite, but to prove that
\emph{``\textsf{zig} is infinite''} using the \textsf{cofix} tactic\footnote{The
  \textsf{cofix} tactic is a method due to Christine Paulin and proposed by the proof
  assistant \Coq{} which implements coinduction on cofixpoint objects.  Roughly speaking,
  it attempts to prove that a property is an \emph{invariant}, by proving it is preserved
  along the infinite object.  Here \emph{`` is infinite'' } is such an invariant on
  \textsf{zig}.}, we do as follows: assume \emph{``\textsf{zig} is infinite''}, then
\textsf{zag} is infinite, from which we get that \emph{``\textsf{zig} is infinite''}.
Since we use the assumption on a strict subtree of \textsf{zig} (the direct subtree of
\textsf{zag}, which is itself a direct subtree of \textsf{zig}) we can conclude that the
\textsf{cofix} tactic has been used properly and that the property holds, namely that
\emph{``\textsf{zig} is infinite''}.  We have proved that \emph{``\textsf{zig} is
  infinite'''} is an invariant along the infinite binary trees \textsf{zig} and
\textsf{zag}.  The cofix reasoning is pictured on Fig.\ref{fig:ziginf}, where the square
box represents the predicate \emph{is infinite}.  Above the rule, there is the step of
coinduction and below the rule the conclusion, namely that the whole \textsf{zig} is
infinite.  We let the reader prove that \emph{``\textsf{backbone} is infinite''}, where
\textsf{backbone} is the greatest fixpoint of the equation:
\begin{center}
  \textsf{backbone} \ = \ \textsf{backbone}  $`.\ `()$
\end{center}

Interested readers may have a look
at~\citet*{coupet-grimal03:_axiom_of_linear_temp_logic,DBLP:journals/fac/Coupet-GrimalJ04,DBLP:journals/corr/abs-0904-3528,bertot05:_filter_coind_stream_applic_to_eratos_sieve,bertot07:_affin_funct_and_series_with}
  and especially \citet[chap.~13]{BertotCasterant04} for other examples of \textsf{cofix} reasoning.
\chapitre{Games and strategy profiles}{Games and strategy profiles}
\label{chap:games}

We start with a formal and inductive presentation of finite games, which is extended in
the next section to a description of infinite games.  The \chaptersection{} ends with a presentation
of equilibria in infinite games: Nash equilibria and subgame perfect equilibria. 

In classical textbooks, finite and infinite games are presented through their
histories. But in the framework of a proof assistant or just to make rigorous proofs, it
makes sense to present them structurally.  Therefore, games are rather naturally seen as
either a leaf to which a \emph{utility function} (a function that assigns a utility, a
payoff or a cost to
each agent, aka an \emph{outcome}) is attached or a node which is associated to an agent
and two subgames.  If agents are \Alice{} and \Bob{} and utilities are natural
numbers, a utility function can be the function $\Alice{}"|->" 3, \Bob{}
"|->" 2$. 

\ssection{Finite Games}{Finite Games}
\label{sec:finite-games}

We restrict to infinite games in which each player has two choices at each turn.  Such
finite extensive game can be seen as built by putting together a player $a$ and two games
$g_{\lft}$ and $g_{\rgt}$, which correspond to either choice made by the
player.  We write $\ogame a, g_{\lft}, g_{\rgt}\fgame$ this game.  We need also a base
case, which is actually what is seen usually as the ``end'' of a game and which is used here as
the basis  which every finite game is based upon.    Actually it is a degenerated game where
players do not play but just receive their payoffs.  Assume there are two players $\Alice$
and $\Bob$ and $p_A$ is the payoff for $\Alice$ and $p_B$ is the payoff for $\Bob$.  This
is the utility fonction $f \ `=\ \Alice "|->" p_A, \Bob "|->" p_B$.  We write $\ogame f\fgame$ this
kind of game.  A finite binary game is a game obtained by
applying repeatedly applications of $\ogame a, \_, \_ \fgame$ to games of the form
$\ogame f \fgame$.  In other words:

\begin{quotation}
  The type \emph{Finite Game} \index{finite game} is defined as an \textbf{inductive} as follows:
  \begin{itemize}
  \item a \emph{Utility function} makes a \emph{Finite Game},
  \item an \emph{Agent} and two \emph{Finite Game}s make a \emph{Finite Game}.
  \end{itemize}
\end{quotation}

 Hence one builds a finite game in two ways: either a given utility function~$f$ is
  encapsulated to make the game \ogame\coqdocid{f}\fgame, or an agent \coqdocid{a} and two
  games $g_l$ and $g_r$ are given to make the game \ogame\coqdocid{a}, $g_l$, $g_r$\fgame.
  Notice that in such games, it can be the case that the same agent \coqdocid{a} has the
  turn twice in a row, like in the game ${\ogame\coqdocid{a}, \ogame
  \coqdocid{a}, g_1,g_2\fgame, g_3\fgame}$.

\ssection{Infinite Games}{Infinite Games}
\label{sec:infinite-games}

 We study games that ``can'' be infinite and ``can'' have finite
or infinite branches, like the $0,1$ game.

\begin{quotation}
  The type \emph{Game} \index{game} is defined as a \textbf{coinductive} as follows:
  \begin{itemize}
  \item a \emph{Utility function} makes a \emph{Game},
  \item an \emph{Agent} and two \emph{Game}s make a \emph{Game}.
  \end{itemize}
\end{quotation}

A \emph{Game} is either a leaf (a terminal node) or a composed game made of an agent (the
agent who has the turn) and two subgames (the formal definition in the \Coq{} vernacular
is given in the appendix~\ref{sec:excerpts-coq-devel}).  Like for finite games, we use the expression
$\ogame f \fgame$ to denote the leaf game associated with the utility function $f$ and
the expression $\ogame a,g_l,g_r\fgame$ to denote the game with agent $a$ at the
root and two subgames $g_l$ and~$g_r$.  For instance, the game we would
draw:\label{pag:game}

\[\begin{psmatrix}[colsep=10pt,rowsep=20pt]
[name=a]{\scriptstyle Alice ~\mapsto~ 1, Bob ~\mapsto~ 2}&
[name=b]{\scriptstyle Alice \mapsto 3, Bob \mapsto 2} \\
{\ovalnode{c}{Alice}}  & {\ovalnode{d}{Bob}} &
[name=e]{\scriptstyle Alice \mapsto 2, Bob \mapsto 2}
 \ncline{c}{a}
  \ncline{c}{d}
  \ncline{d}{b}
  \ncline{d}{e}
\end{psmatrix}
\]

\medskip

\noindent is represented by the term:

\medskip

  \ogame \Alice{}, \ogame$\scriptstyle Alice ~\mapsto~ 1, Bob, ~\mapsto~2$\fgame, \ogame\Bob{} 
  \ogame{\scriptsize $Alice \mapsto 2, Bob \mapsto 2$}\fgame,
  \ogame{\scriptsize$Alice \mapsto 3, Bob \mapsto 2$}\fgame\fgame

\medskip

Concerning comparisons of utilities we consider a very general setting where a utility is
no more that a type (a ``set'') with a preference which is a preorder, i.~e., a transitive
and reflexive relation, and which we write \leut. A preorder is enough for what we want to
prove.  By using a very general preorder, it makes extremely easy to go from payoff to
cost, we have just to switch the direction of $\leut$ keeping the same carrier.
We assign to the leaves, a utility function which associates a utility to each agent.

Like for histories, to describe an infinite game one uses a fixpoint equation. For
instance to describe the $0,1$~game one uses the equation:

\medskip

\noindent
\begin{small}
  \begin{math}
    \zone\, = \, \ogame~\Alice,~ \ogame ~\Bob,~ \zone,~\ogame {\scriptstyle\Alice ~"|->"~
      0, \Bob ~"|->"~1}\fgame\fgame, \ogame {\scriptstyle \Alice ~"|->"~ 1, \Bob
      ~"|->"~0}\fgame\fgame
  \end{math}
\end{small}

\ssection{Infinite Strategy Profiles}{Infinite Strategy Profiles}
\label{sec:inf_stra}\index{strategy profile}

The main concept of this \bookarticle{} is this of infinite strategy profile which is a coinductive.
More specifically, in this \booksecvirg we focus on infinite binary strategy profiles associated
with infinite binary games.
\begin{quotation}
  The type of \emph{Strategy Profiles} is defined as a \textbf{coinductive} as follows:
  \begin{itemize}
  \item a \emph{Utility function} makes a \emph{Strategy Profile}.
  \item an \emph{Agent}, a \emph{Choice} and two \emph{Strategy Profiles} make a \emph{Strategy Profile}.
  \end{itemize}
\end{quotation}

Basically\footnote{The formal definition in the \Coq{} vernacular is given in
  appendix~\ref{sec:excerpts-coq-devel}.} an infinite strategy profile which is not a leaf
is a node with four items: an agent, a choice, two infinite strategy profiles.  A strategy
profile is the same as a game, except that there is a choice.  In what follows, since we
consider equilibria, we only address strategy profiles.  Strategy profiles of the first
kind are written $\og f \fg$ and strategy profiles of the second kind are written $\og
a,c,s_l,s_r\fg$.  In other words, if between the ``$\og$'' and the ``$\fg$'' there is one
component, this component is a utility function and the result is a leaf strategy profile
and if there are four components, this is a node strategy profile.  For instance, with the
game of page~\pageref{pag:game} one can associate at least the following strategy
profiles:

\[\begin{psmatrix}[colsep=10pt,rowsep=20pt]
[name=a]{\scriptstyle Alice ~\mapsto~ 1, Bob ~\mapsto~ 2}&
[name=b]{\scriptstyle Alice \mapsto 3, Bob \mapsto 2} \\
{\ovalnode{c}{Alice}}  & {\ovalnode{d}{Bob}} &
[name=e]{\scriptstyle Alice \mapsto 2, Bob \mapsto 2}
 \ncline{c}{a}
  \ncline[linewidth=.1]{c}{d}
  \ncline{d}{b}
  \ncline[linewidth=.1]{d}{e}
\end{psmatrix}
\]
\[\begin{psmatrix}[colsep=10pt,rowsep=20pt]
[name=a]{\scriptstyle Alice ~\mapsto~ 1, Bob ~\mapsto~ 2}&
[name=b]{\scriptstyle Alice \mapsto 3, Bob \mapsto 2} \\
{\ovalnode{c}{Alice}}  & {\ovalnode{d}{Bob}} &
[name=e]{\scriptstyle Alice \mapsto 2, Bob \mapsto 2}
 \ncline[linewidth=.1]{c}{a}
  \ncline{c}{d}
  \ncline{d}{b}
  \ncline[linewidth=.1]{d}{e}
\end{psmatrix}
\]
\medskip
\noindent which correspond to the expressions

  $\og$\Alice{},\rgt,$\og$ $\scriptstyle Alice ~\mapsto~ 1, Bob ~\mapsto~ 2$ $\fg$,\\
  \hspace*{65pt} $\og$\Bob{},\lft,$\og${\scriptsize $Alice \mapsto 2, Bob \mapsto 2$}$\fg$,$\og${\scriptsize$Alice \mapsto 3, Bob \mapsto
    2$}$\fg$ $\fg$ 

and

  $\og$\Alice{},\lft,$\og$ $\scriptstyle Alice ~\mapsto~ 1, Bob ~\mapsto~ 2$ $\fg$,\\
  \hspace*{65pt} $\og$\Bob{},\lft,$\og${\scriptsize $Alice \mapsto 3, Bob \mapsto 2$}$\fg$,$\og${\scriptsize$Alice \mapsto 2, Bob \mapsto
    2$}$\fg$,
  $\fg$.

Let us call $s_0$ the first strategy profile and $s_1$ the second one.   To describe an infinite strategy profile one uses most of the time a fixpoint equation like:
\[t \quad = \quad \og Alice, \lft, \og {\scriptstyle Alice ~\mapsto~ 0, Bob ~\mapsto~ 0}\fg, \og Bob, \lft, t, t\fg\fg\]
which corresponds to the pictures:
\[
\raisebox{50pt}{
  \begin{psmatrix}[colsep=6pt,rowsep=12pt]
  &&&&[name=0]\\
  &&&&t\\
  &[name=1]&&&&&&&[name=2]
  \ncline{0}{1}
  \ncline{0}{2}
  \ncline{1}{2}
\end{psmatrix}
}
\quad \raisebox{75pt}{=} \ \
\begin{psmatrix}[colsep=6pt,rowsep=12pt]
  &&& {\ovalnode{sommet}{Alice}}\\
  & [name=1]{\raisebox{-10pt}{$\scriptstyle Alice ~\mapsto~ 0, Bob ~\mapsto~ 0$}} &&& {\ovalnode{2}{Bob}}\\
  &&&[name=21]&&&&& [name=22]\\
  &&& t &&&&& t\\
  && [name=211] && [name=212] & [name=221] &&&&&&&& [name=222] & 
  \ncline[linewidth=.1]{sommet}{1}
  \ncline{sommet}{2}
  \ncline[linewidth=.1]{2}{21}
  \ncline{sommet}{2}
  \ncline{2}{22}
  \ncline{21}{211}
  \ncline{21}{212}
  \ncline{211}{212}
  \ncline{22}{221}
  \ncline{22}{222}
  \ncline{221}{222}
\end{psmatrix}
\]

Other examples of infinite strategy profiles are given in \Chapsec~\ref{chap:case_studies}.
Usually an infinite game is defined as a cofixpoint, \ie as the solution of an equation,
possibly a parametric equation.

Whereas in the finite case we can easily associate with a strategy profile a utility
function, \ie a function which assigns a utility to an agent, as the result of a recursive
evaluation, this is no more the case with infinite strategy profiles.  One reason is that
we are not sure that such a utility function exists for the strategy profile.  This makes
the function partial.  Therefore $s2u$ (an abbreviation for \emph{strategy
  profile-to-utility}) is a relation between a strategy profile and a utility function,
which is also a coinductive; $s2u$ appears in expression of the form $(s2u\ s\ a\ u)$
where $s$ is a strategy profile, $a$ is an agent and $u$ is a utility.  It reads ``$u$ is
a utility of the agent $a$ in the strategy profile~$s$''.

\begin{quotation}
$s2u$ is a predicate defined \textbf{coinductively} as follows:
  \begin{itemize}
  \item $s2u \og f \fg~a~(f(a))$ holds,
  \item if $s2u~s_l~a~u$ holds then $s2u~\og a',\lft,s_l,s_r\fg~a~u$ holds,
  \item if $s2u~s_r~a~u$ holds then $s2u~\og a',\rgt,s_l,s_r\fg~a~u$ holds.
  \end{itemize} 
\end{quotation}

This means the utility of $a$ for the leaf strategy profile $\og f\fg$ is $f(a)$, \ie the
value delivered by the function $f$ when applied to $a$.  The utility of $a$ for the
strategy profile $\og a',\lft,s_l,s_r\fg$ is $u$ if the utility of $a$ for the strategy
profile~$s_l$ is~$u$. For $s_0$, the first above strategy profile, one has
$s2u~s_0~\Alice{}~2$, which means that, for the strategy profile~$s_0$, the utility of
\Alice{} is $2$. 

\sssection{The predicate \ltl}{The predicate \ltl}

 In order to insure that $s2u$ has a result we define a predicate
\ltl{} that says that if one follows the choices shown by the strategy profile one reaches
a leaf, i.e., one does not go forever.

\begin{quotation}
The predicate \ltl{} is defined \textbf{inductively} as
  \begin{itemize}
  \item the strategy profile $\og f \fg$ \ltl{},
  \item if $s_l$ \ltl{}, then $\og a,\lft s_l, s_r\fg$ \ltl{},
  \item if $s_r$ \ltl{}, then $\og a, \rgt, s_l, s_r\fg$ \ltl{}.
  \end{itemize}
\end{quotation}

This means that a strategy profile which is itself a leaf \ltl{} and if the strategy
profile is a node, if the choice is \lft{} and if the left strategy subprofile \ltl{} then
the whole strategy \ltl{} and similarly if the choice is \rgt.  We claim that this gives a
good notion of \emph{finite horizon} which seems to be rather a concept on strategy
profiles than on games.

If $s$ is a strategy profile that satisfies the predicate \ltl{} then the utility exists
and is unique, in other words:
\begin{itemize}
\item[$`(!)$] \emph{Existence.}  For all agent $a$ and for all strategy profile $s$, if $s$ \ltl{} then there exists a utility $u$ which ``is a utility of the agent $a$ in the strategy profile~$s$''.
\item[$`(!)$] \emph{Uniqueness.}  For all agent $a$ and for all strategy profile $s$, if
  $s$ \ltl{}, if ``$u$ is a utility of the agent $a$ in the strategy~$s$'' and ``$v$ is a
  utility of the agent $a$ in the strategy~$s$'' then $u=v$.
\end{itemize}
We say ``the'' utility in this case since the relation $s2u~a$ is functional.   Now, with
an abuse of notation, we will write $s2u(s)(a)$ the utility of the agent $a$ in the
strategy profile $s$, when $s$ \textit{``leads to a leaf'}.

\sssection{The predicate \altl}{The predicate \altl}

We also consider a predicate \altl{} which means that everywhere in the strategy profile, if one follows the choices, one leads to a leaf.  This property is defined everywhere on an
infinite strategy profile and is therefore coinductive.

\begin{quotation}
  The predicate \altl{} is defined \textbf{coinductively}
  \begin{itemize}
  \item the strategy profile  $\og f \fg$ \altl{},
  \item for all choice $c$, if $\og a, c, s_l, s_r\fg$ \ltl{}, if $s_l$ \altl{}, if $s_r$ \altl{}, then $\og a, c, s_l, s_r\fg$ \altl{}.
  \end{itemize}
\end{quotation}

This says that a strategy profile, which is a leaf, \altl{} and that a composed strategy
profile inherits the predicate from its strategy subprofiles provided itself \ltl{}.

\sssection{The $\Box$ modality}{The $\Box$ modality}

$\Box$ is a modality, borrowed form temporal logic, \ie an operator which modifies a
predicate.  $\Box P$  reads \emph{always} $P$. 
\begin{quotation}
  The modality $\Box$ is defined \textbf{coinductively} by
  \begin{itemize}
  \item $P \og f \fg \  "=>" \  (`[ ] P) \og f \fg$
  \item $P~\og a, c,  s_l', s_r'\fg \  "=>" \  (`[ ] P) s_l \  "=>" \  (`[ ]
    P) s_f \  "=>" \  {(`[ ] P)~\og a, c,  s_l', s_r'\fg} $
  \end{itemize}
\end{quotation}
One has the proposition:
\begin{prop}
  $\forall s, (`[ ] \ltl)~s \ "<=>" \ s \altl$.
\end{prop}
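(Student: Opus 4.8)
The plan is to establish the two implications separately, each by a guarded coinduction, after noting that $\Box\,\ltl$ and $\altl$ are really the greatest fixpoints of one and the same functor on strategy profiles. Spelling out the coinductive clause of $\Box P$ with $P := \ltl$ (and correcting the evident typo, so that the node clause of $\Box$ requires $(\Box P)\,s_l$ \emph{and} $(\Box P)\,s_r$) gives: $(\Box\,\ltl)\,\og f\fg$ holds because the leaf $\og f\fg$ \ltl{}, while $(\Box\,\ltl)\,\og a,c,s_l,s_r\fg$ holds exactly when the node $\og a,c,s_l,s_r\fg$ \ltl{} together with $(\Box\,\ltl)\,s_l$ and $(\Box\,\ltl)\,s_r$. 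These are precisely the defining clauses of $\altl$, with $\Box\,\ltl$ playing the role of $\altl$; so morally the two predicates coincide, but since in \Coq{} they are introduced by distinct \textsf{CoInductive}s the coincidence must be witnessed by a coinductive argument in each direction.

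For $(\Box\,\ltl)\,s \Rightarrow s\,\altl$, I would invoke the coinduction principle for $\altl$, obtaining a coinduction hypothesis asserting that the implication already holds on immediate subprofiles, and then case on the shape of $s$. If $s=\og f\fg$, the conclusion $s\,\altl$ is immediate from the leaf clause of $\altl$. If $s=\og a,c,s_l,s_r\fg$, I destruct the assumption $(\Box\,\ltl)\,s$ into its three components: the local fact that $\og a,c,s_l,s_r\fg$ \ltl{}, and the two residuals $(\Box\,\ltl)\,s_l$ and $(\Box\,\ltl)\,s_r$. Feeding the residuals to the coinduction hypothesis yields $s_l\,\altl$ and $s_r\,\altl$, and the node clause of $\altl$ then closes the goal. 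Guardedness is respected because the coinduction hypothesis is applied only underneath the $\altl$ constructor, to the strict subprofiles $s_l,s_r$.

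The converse $s\,\altl \Rightarrow (\Box\,\ltl)\,s$ is entirely symmetric. Again I coinduct and split on $s$: the leaf case uses that $\og f\fg$ \ltl{} (first clause of \ltl) followed by the leaf clause of $\Box$; in the node case I invert $s\,\altl$ to recover that $\og a,c,s_l,s_r\fg$ \ltl{}, $s_l\,\altl$ and $s_r\,\altl$, apply the coinduction hypothesis to the two subprofiles to get $(\Box\,\ltl)\,s_l$ and $(\Box\,\ltl)\,s_r$, and assemble these with the local \ltl{} fact through the node clause of $\Box$. Observe that in both directions the only thing ever used about \ltl{} is the \emph{local} fact at the current node; its inductive character is otherwise irrelevant, which is exactly why the two \emph{forever} predicates line up clause for clause.

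The step I expect to be the real, if mild, obstacle is the inversion of the coinductive hypotheses: extracting the three premises of the node clause of $\altl$ (respectively the components packaged by $\Box$) must be done by pattern matching on the coinductive proof object, and one must ensure that the coinduction hypothesis is invoked strictly under a constructor so that the \Coq{} guardedness checker accepts the resulting cofixpoint. Everything else reduces to a routine case analysis on whether $s$ is a leaf or a node; crucially, no well-foundedness argument is needed, which is the whole reason coinduction, rather than induction, is the right tool here.
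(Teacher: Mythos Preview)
Your proposal is correct and is essentially the only reasonable way to prove this equivalence: the paper does not spell out a proof in the text (it merely states the proposition and defers to the \Coq{} script), but the two predicates are visibly the greatest fixpoints of the same monotone operator, and your two guarded coinductions are exactly what the \Coq{} development must do. Your observation that the inductive structure of \ltl{} is never used---only the local fact at the current node---is the key point and is well put.
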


\sssection{Bisimilarity}{The bisimilarity}

We define also bisimilarity between games and between strategy profiles.  For strategy
profiles, this is defined by:\pagebreak[3]

\begin{quotation}
  The bisimilarity $\sbis$ on \emph{strategy profiles} is defined \textbf{coinductively} as follows:
  \begin{itemize}
  \item $\og f\fg \ \sbis\ \og f\fg$,
  \item if $s_l\sbis s_l'$ and $s_r \sbis s_r'$ then $\og a, c, s_l s_r\fg\ \sbis\ \og a,c, s_l', s_r'\fg$.
  \end{itemize}
\end{quotation}

This says that two leaves are bisimilar if and only if they have the same utility function
and that two strategy profiles are bisimilar if and only if they have the same head agent,
the same choice and bisimilar strategy subprofiles.

\sssection{Game of a strategy profile}{The game of a strategy profile}

We can associate with a strategy profile a game  that is the game underlying the strategy
profile.  In other words, $s2g (s)$ is the game in which   all the choices are removed. 
\begin{quotation}
  The fonction $s2g$ is defined \textbf{coinductively} as follows:
  \begin{itemize}
  \item $s2g \og f \fg \quad= \quad \ogame f \fgame$
  \item $s2g \og a, c, s_l, s_r\fg \quad = \quad \ogame a, s2g (s_l), s2g (s_r) \fgame$.
  \end{itemize}
\end{quotation}

\ssection{Equilibria}{Subgame perfect equilibria and Nash equilibria}
\label{sec:equi}

Nash equilibria are specific strategy profiles, but to define them one needs the concept
of convertibility.

\sssection{Convertibility}{Convertibility}

Despite it is not strictly defined in textbooks as such, \emph{convertibility} is an
important binary relation on strategy profiles, necessary to speak formally about
equilibria.  Indeed in order to characterize a strategy profile $e$ as a Nash equilibria,
it is assumed that each agent compares the payoff returned by that strategy profile $e$
with the payoff returned by other strategy profiles, which are ``converted'' from the
$e$  by the agent changing his mind.  Since this relation plays a crucial role in
formal definition of a Nash equilibrium, it is worth describing, first informally, then a
little more formally, knowing that the ultimate formal definition is given in the in
\Coq{} vernacular on page~\pageref{sec:excerpts-coq-devel}.  Convertibility was
introduced for finite games by \citet{vestergaard06:IPL}.   

\ssssection{Convertibility informally}

\citet[Chap. 5]{osborne04a} presents a Nash equilibrium as ``a strategy profile from which
no player wishes to deviate, given the other player's strategies''.  We have therefore to
say what one means by ``a player deviating when the others do not''.  In other words, we
want to make precise the concept of ``deviation''.  For that, assume given an agent~$a$
and a strategy profile~$s$, a strategy profile in which only finitely many choices made by
the given agent $a$ are changed is a ``deviation'' of~$a$ and is said, in this framework,
to be convertible to $s$ for~$a$.  The binary relation between two strategy profiles,
which we call \emph{convertibility}, can be made precise by giving it an inductive
definition.\footnote{It is also possible to give it a coinductive definition, in which
  infinitely many choices can be changed, but we feel that this goes beyond the ability of
  a rational agent who has finite capacities to reason.}  In the previous examples, $s_0$
is convertible to $s_1$ for $\Alice$, since the only change between $s_0$ and $s_1$ is
$\Alice$ changing her first choice.

\ssssection{Convertibility as an inductively defined  mathematical relation}

We write $\conva$  the convertibility for agent $a$.

\begin{quotation}
  The relation $\conva$ is defined \textbf{inductively} as follows:
\ifElsev \begin{description}[$\star$] \else \begin{description} \fi
\item[ConvBis:] $\conva$ contains bisimilarity $\sbis$, i.e.,  
    \[ \prooftree s\sbis s'
    \justifies s \conva s'
    \endprooftree
    \]
  
  \item[ConvAgent]: If the node has the same agent as the agent in $\conva$ then the
    choice may change, i.e.,
    \[
    \prooftree s_1 \conva s_1'\qquad s_2 \conva s_2' %
    \justifies \og a, c, s_1, s_2 \fg \ \conva \ \og a, c', s_1', s_2'\fg
    \using ConvAgent
    \endprooftree
    \]

  \item[ConvChoice:] If the node does not have the same agent as in $\conva$, then the choice has to be the same:
    \[
    \prooftree 
  s_1 \conva s_1'\qquad s_2 \conva s_2' %
  \justifies \og a', c, s_1, s_2 \fg \ \conva \ \og a', c, s_1', s_2'\fg
    \using ConvChoice
  \endprooftree
  \]
\end{description}
\end{quotation}

Roughly speaking two strategy profiles are convertible for $a$ if their difference only
for the choices of $a$.  In the previous example (Section~\ref{sec:inf_stra}) we may write
$s_0 \convalice s_1$, to say that $s_0$ is convertible to $s_1$ for $\Alice$.  In
Figure~\ref{fig:ind_proof_conv}, page~\pageref{fig:ind_proof_conv}, we develop the
skeleton of the proof of this convertibility, namely we give the proof tree of
$s_0~\convalice~s_1$.  Since $\conva$ is defined inductively, this means that the changes
are finitely many.  We feel that this makes sense since an agent can only conceive
finitely many issues.  For instance for two strategy profiles associated with the $0,1$ game, we
get

\pagebreak[2]
\bigskip

\begin{math}
  \begin{psmatrix}[colsep=20pt,rowsep=20pt] 
    & {\ovalnode{a}{A}}
    &{\ovalnode{b}{B}} %
    & {\ovalnode{c}{A}} & {\ovalnode{d}{B}} &{\ovalnode{e}{A}}
    &{\ovalnode{f}{B}} & {\ovalnode{g}{A}} & [name=h] & [name=i] &\\ 
    &[name=a1] {\scriptstyle 0,1} %
    &[name=b1] {\scriptstyle 1,0} %
    & [name=c1]{\scriptstyle 1,0} %
    &[name=d1] {\scriptstyle 0,1} %
    &[name=e1] {\scriptstyle 1,0} %
    &[name=f1] {\scriptstyle 0,1} %
    &[name=g1] \phantom{\scriptstyle 1, 0} %
    & [name=h1] %
    \ncarc[arrows=->,linewidth=.08]{a}{b}\Aput{\scriptstyle \mathsf{c}} %
    \ncarc[arrows=->]{b}{c} \Aput{\scriptstyle \mathsf{c}} %
    \ncarc[arrows=->,linewidth=.08]{c}{d} \Aput{\scriptstyle \mathsf{c}} %
    \ncarc[arrows=->]{d}{e} \Aput{\scriptstyle \mathsf{c}} %
    \ncarc[arrows=->,linewidth=.08]{e}{f} \Aput{\scriptstyle \mathsf{c}} %
    \ncarc[arrows=->,linewidth=.08]{f}{g} \Aput{\scriptstyle \mathsf{c}} %
    \ncarc[arrows=->,linestyle=dotted]{g}{h}\Aput{\scriptstyle \mathsf{c}} %
    \ncarc[arrows=->,linestyle=dotted]{h}{i} \Aput{\scriptstyle \mathsf{c}} %
    \ncarc[arrows=->]{a}{a1} \Aput{\scriptstyle \mathsf{s}}%
    \ncarc[arrows=->,linewidth=.08]{b}{b1} \Aput{\scriptstyle \mathsf{s}} %
    \ncarc[arrows=->]{c}{c1} \Aput{\scriptstyle \mathsf{s}} %
    \ncarc[arrows=->,linewidth=.08]{d}{d1} \Aput{\scriptstyle \mathsf{s}} %
    \ncarc[arrows=->]{e}{e1} \Aput{\scriptstyle \mathsf{s}} %
    \ncarc[arrows=->]{f}{f1} \Aput{\scriptstyle \mathsf{s}} %
    \ncarc[arrows=->,linewidth=.08]{g}{g1}\Aput{\scriptstyle \mathsf{s}} %
    \ncarc[arrows=->,linestyle=dotted]{h}{h1} \Aput{\scriptstyle \mathsf{s}}%
  \end{psmatrix}
\end{math}

\bigskip

\centerline{\large $\mathbf{\top}$}
\centerline{\it \small \raisebox{12pt}{\rotatebox{270}{B}}}
\vspace*{-8pt}
\centerline{\large $\bot$}

\bigskip

\begin{math}
  \begin{psmatrix}[colsep=20pt,rowsep=20pt] 
    & {\ovalnode{a}{A}}
    &{\ovalnode{b}{B}} %
    & {\ovalnode{c}{A}} & {\ovalnode{d}{B}} &{\ovalnode{e}{A}}
    &{\ovalnode{f}{B}} & {\ovalnode{g}{A}} & [name=h] & [name=i] &\\ 
    &[name=a1] {\scriptstyle 0,1} %
    &[name=b1] {\scriptstyle 1,0} %
    & [name=c1]{\scriptstyle 1,0} %
    &[name=d1] {\scriptstyle 0,1} %
    &[name=e1] {\scriptstyle 1,0} %
    &[name=f1] {\scriptstyle 0,1} %
    &[name=g1] \phantom{\scriptstyle 1, 0} %
    & [name=h1] %
    \ncarc[arrows=->,linewidth=.08]{a}{b}\Aput{\scriptstyle \mathsf{c}} %
    \ncarc[arrows=->]{b}{c} \Aput{\scriptstyle \mathsf{c}} %
    \ncarc[arrows=->,linewidth=.08]{c}{d} \Aput{\scriptstyle \mathsf{c}} %
    \ncarc[arrows=->,linewidth=.08]{d}{e} \Aput{\scriptstyle \mathsf{c}} %
    \ncarc[arrows=->,linewidth=.08]{e}{f} \Aput{\scriptstyle \mathsf{c}} %
    \ncarc[arrows=->]{f}{g} \Aput{\scriptstyle \mathsf{c}} %
    \ncarc[arrows=->,linestyle=dotted]{g}{h}\Aput{\scriptstyle \mathsf{c}} %
    \ncarc[arrows=->,linestyle=dotted]{h}{i} \Aput{\scriptstyle \mathsf{c}} %
    \ncarc[arrows=->]{a}{a1} \Aput{\scriptstyle \mathsf{s}}%
    \ncarc[arrows=->,linewidth=.08]{b}{b1} \Aput{\scriptstyle \mathsf{s}} %
    \ncarc[arrows=->]{c}{c1} \Aput{\scriptstyle \mathsf{s}} %
    \ncarc[arrows=->]{d}{d1} \Aput{\scriptstyle \mathsf{s}} %
    \ncarc[arrows=->]{e}{e1} \Aput{\scriptstyle \mathsf{s}} %
    \ncarc[arrows=->,linewidth=.08]{f}{f1} \Aput{\scriptstyle \mathsf{s}} %
    \ncarc[arrows=->,linewidth=.08]{g}{g1}\Aput{\scriptstyle \mathsf{s}} %
    \ncarc[arrows=->,linestyle=dotted]{h}{h1} \Aput{\scriptstyle \mathsf{s}}%
  \end{psmatrix}
\end{math}

\sssection{Nash equilibria}{Nash equilibria}

The notion of Nash equilibrium is translated from the notion in textbooks.  Let us recall
it. According to \citet[chap. 5]{osborne04a}, \textit{A Nash equilibrium is a``pattern[s]
  of behavior with the property that if every player knows every other player's behavior
  she has not reason to change her own behavior''} in other words, \textit{``a Nash
  equilibrium [is] a strategy profile from which no player wishes to deviate, given the
  other player's strategies.'' }  As we said, the informal concept of deviation is
expressed formally by the binary relation ``convertibility'.  The concept of Nash
equilibrium is based on a comparison of utilities.
$s$ is a \emph{Nash equilibrium} if the following implication holds:
\begin{quotation}
  \noindent If for all agent~$a$ and for all strategy profile~$s'$ which is convertible to
  $s$, i.e., $s\conva s'$, if $u$ is the utility of~$s$ for~$a$ and $u'$ is the utility
  of~$s$' for $a$, then $u' \leut u$.
\end{quotation}
Roughly speaking this means that a Nash equilibrium is a strategy profile in which no
agent has interest to change his choice since doing so he cannot get a better payoff.

\sssection{Subgame Perfect Equilibria}{Subgame Perfect Equilibria}
\label{sec:sgpe}

Let us consider now \emph{subgame perfect equilibria}, which we write $SGPE$.  $SGPE$ is a
property of strategy profiles. It requires the strategy subprofiles to fulfill
coinductively the same property, namely to be a $SGPE$, and to insure that the strategy
profile with the best utility for the node agent to be chosen.  Since both the strategy
profile and its strategy subprofiles are potentially infinite, it makes sense to define
$SGPE$ coinductively.

\pagebreak[2]
\begin{quotation}
  $SGPE$ is defined \textbf{coinductively} as follows:
\ifElsev \begin{itemize}[$\star$] \else \begin{itemize} \fi
  \item $SGPE \og f\fg$,
  \item if $\og a,\lft,s_\lft,s_r\fg$ \altl{}, if $SGPE(s_\lft)$ and $SGPE(s_r)$, if
    $s2u(s_r)(a) \leut s2u(s_\lft)(a)$ then $SGPE~\og a,\lft,s_\lft,s_r\fg$,
  \item if $\og a,\rgt{},s_\lft,s_r\fg$ \altl{}, if $SGPE(s_\lft)$ and $SGPE(s_r)$, if
    $s2u(s_\lft)(a) \leut s2u(s_r)(a)$then $SGPE~\og a,\rgt{},s_\lft,s_r\fg$.
  \end{itemize}
\end{quotation}

This means that a strategy profile, which is a leaf, is a subgame perfect equilibrium.
Moreover if the strategy profile is a node, if the strategy profile \altl{}, if it has
agent $a$ and choice \lft, if both strategy subprofiles are subgame perfect equilibria and
if the utility of the agent $a$ for the right strategy subprofile is less than this for
the left strategy subprofile then the whole strategy profile is a subgame perfect
equilibrium and vice versa.  If the choice is \rgt{} this works similarly.

Notice that since we require that the utility can be computed not only for the strategy
profile, but for the strategy subprofiles and for the strategy subsubprofiles and so on,
we require these strategy profiles not only to \textit{``lead to a leaf''} but to
\textit{``always lead to a leaf''}.

We define orders (one for each agent $a$) between strategy profiles that \emph{lead to a
  leaf} which we write $\le_a$.
\begin{quotation}
 \noindent $s' \le_{a} s$ iff  $s2u(s')(a) \leut s2u(s)(a)$.
\end{quotation}
We say ``the'' utility since in this case the relation $s2u~a$ is functional. 

\begin{prop}
$\le_a$ is an order on strategy profiles which \textit{lead to a leaf}.
\end{prop}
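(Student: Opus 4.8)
The plan is to reduce the defining clauses of a (pre)order for $\le_a$ to the corresponding properties of the utility preorder $\leut$, transported along the evaluation map $s \mapsto s2u(s)(a)$. Since $\leut$ is only assumed transitive and reflexive, I read ``order'' here in the preorder sense: I shall establish reflexivity and transitivity, and note that antisymmetry cannot be expected, because $\leut$ itself need not be antisymmetric.

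First I would make the well-definedness explicit, as this is the whole point of restricting to profiles that \ltl{}. By the Existence and Uniqueness statements recalled just above, whenever $s$ \ltl{} there is exactly one utility $u$ with $s2u~s~a~u$; this is precisely what licenses the functional notation $s2u(s)(a)$ appearing in the definition of $\le_a$. Consequently, on the domain of profiles that \ltl{}, the relation $s' \le_a s$ is exactly the preimage under $s \mapsto s2u(s)(a)$ of the relation $\leut$ on utilities.

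With this in hand the two clauses are immediate. For reflexivity, let $s$ \ltl{}; then $s2u(s)(a) \leut s2u(s)(a)$ by reflexivity of $\leut$, hence $s \le_a s$. For transitivity, let $s''$, $s'$ and $s$ all \ltl{} with $s'' \le_a s'$ and $s' \le_a s$; unfolding the definition gives $s2u(s'')(a) \leut s2u(s')(a)$ and $s2u(s')(a) \leut s2u(s)(a)$, and transitivity of $\leut$ then yields $s2u(s'')(a) \leut s2u(s)(a)$, \ie $s'' \le_a s$.

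The only genuinely delicate point --- and it is a definitional rather than a combinatorial one --- is that $s2u(s)(a)$ be a bona fide function on the restricted domain; once that is granted, everything else is a one-line transfer of a property of $\leut$. I expect neither induction nor coinduction to be needed in the argument itself, since the coinductive content has already been discharged in establishing existence and uniqueness of $s2u$ for profiles that \ltl{}.
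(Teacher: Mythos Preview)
Your proposal is correct and matches the paper's own treatment: the paper simply states that the proof is straightforward, and your argument---pulling back reflexivity and transitivity of the utility preorder $\leut$ along the well-defined map $s \mapsto s2u(s)(a)$---is exactly the natural unpacking of that remark. Your observation that only a preorder can be expected (since $\leut$ is merely assumed reflexive and transitive) is also apt and consistent with the paper's stated assumptions on utilities.
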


The proof is straightforward.

\begin{prop}
  A subgame perfect equilibrium is a Nash equilibrium.
\end{prop}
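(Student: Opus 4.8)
The plan is to fix an arbitrary agent $a$ and prove, \textbf{by induction on the derivation of $s \conva s'$}, the statement: for all strategy profiles $s$ and $s'$, if $SGPE(s)$ and $s \conva s'$, then whenever $u'$ is a utility of $s'$ for $a$ and $u$ is the utility of $s$ for $a$, we have $u' \leut u$. Since $\conva$ is defined inductively by the three rules (ConvBis, ConvAgent, ConvChoice) while $SGPE$ is coinductive, the right move is to carry $SGPE(s)$ as a side hypothesis and recurse on the convertibility derivation, inverting $SGPE$ at each node to recover the premises of its defining clause. Note first that $SGPE(s)$ forces $s$ to \altl{}, so by the Existence and Uniqueness results $u = s2u(s)(a)$ is well defined; the converted utility $u'$ is supplied as a hypothesis of the Nash condition, so there is nothing to prove when it fails to exist, the implication being then vacuous. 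Having $a$ arbitrary, this yields the Nash property.

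For the base case ConvBis we have $s \sbis s'$, and I would invoke the routine lemma that bisimilar strategy profiles which lead to a leaf carry the same utility for every agent, giving $u' = u$ and hence $u' \leut u$ by reflexivity of $\leut$. The case ConvChoice concerns a node $\og a', c, s_1, s_2\fg \conva \og a', c, s_1', s_2'\fg$ whose agent $a'$ differs from $a$, so the choice $c$ is unchanged. Inversion of $SGPE$ supplies $SGPE(s_1)$ and $SGPE(s_2)$, and since the choice is identical on both sides the utility of $a$ is read off the same subprofile: if $c=\lft$ then $u' = s2u(s_1')(a)$ and $u = s2u(s_1)(a)$, and the induction hypothesis on $s_1 \conva s_1'$ yields $u' \leut u$ directly (symmetrically for $c=\rgt$).

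The crucial case is ConvAgent, $\og a, c, s_1, s_2\fg \conva \og a, c', s_1', s_2'\fg$, where the node agent is exactly the comparison agent $a$ and the choice may flip from $c$ to $c'$. Inversion of $SGPE$ supplies $SGPE(s_1)$, $SGPE(s_2)$, and the ordering condition stating that the branch selected by $c$ is the $\leut$-better one; hence $u = s2u(\og a,c,s_1,s_2\fg)(a)$ is the $\leut$-maximum of $s2u(s_1)(a)$ and $s2u(s_2)(a)$. The converted utility $u'$ equals either $s2u(s_1')(a)$ or $s2u(s_2')(a)$ according to $c'$; by the induction hypotheses on $s_1 \conva s_1'$ and on $s_2 \conva s_2'$ these are bounded by $s2u(s_1)(a)$ and $s2u(s_2)(a)$ respectively, each of which is $\leut u$. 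Transitivity of the preorder $\leut$ then gives $u' \leut u$.

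This ConvAgent step is where the subgame-perfect structure does the real work: the $SGPE$ clause guarantees the node agent already selects the locally optimal child, so no finite re-routing of his own choices can improve his payoff. I expect it to be the main obstacle, both in isolating the maximality statement cleanly from the defining clause and in dispatching the four sub-cases $c,c' \in \{\lft,\rgt\}$. The only auxiliary ingredients are the bisimilarity-preserves-utility lemma used in the base case and the already-established functionality (Existence and Uniqueness) of $s2u$ on profiles that \altl{}.
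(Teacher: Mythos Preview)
Your proposal is correct and follows essentially the same route as the paper: induction on the derivation of $s \conva s'$, carrying $SGPE(s)$ as a side hypothesis, inverting $SGPE$ at each composite node to get $SGPE$ on the subprofiles plus the local optimality inequality, and closing the ConvAgent case by transitivity of~$\leut$. The only cosmetic difference is in the base case: the paper's proof treats it as literal equality $s=s'$ (matching the \Coq{} constructor \coqdocid{ConvRefl}), whereas you follow the text's \emph{ConvBis} clause and invoke a bisimilarity-preserves-utility lemma; either is fine and the paper is itself inconsistent between its prose definition and its \Coq{} definition on this point.
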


\begin{proof}
  Suppose that $s$ is a strategy profile which is a $SGPE$ and which has to be proved to
  be is a Nash equilibrium.

Assuming that $s'$ is a strategy profile such that $s \conva s'$, let us prove by
induction on $s \conva s'$ that $s' \le_{a} s$:
\begin{itemize}
\item Case $s = s'$, by reflexivity, $s' \le_{a} s$.
\item Case $s = \og x, \lft ,s_\lft,s_r\fg$ and $s' = \og x, \lft , s'_\lft, s'_r \fg$ with
${x\neq a}$. $s \conva s'$ and the definition of $\conva$ imply $s_\lft \conva s'_\lft$ and $s_r
\conva s'_r$.  $s_\lft$ which is a strategy subprofile of a $SGPE$ is a $SGPE$ as well.
Hence by induction hypothesis, $s'_\lft \le_a s_\lft$.

The utility of $s$ (respectively of $s'$) for $a$ is the utility of $s_\lft$ (respectively of
$s_\lft'$) for $a$, then $s' \le_a s$.
\item The case $s \mathop{=} \og x, \rgt{} ,s_\lft,s_r\fg$ and $s' \mathop{=} \og x, \rgt{} ,
s'_\lft, s'_r \fg$ is similar.
\item Case $s \mathop{=} \og a, \lft, s_\lft, s_r \fg$ and $s' \mathop{=} \og a, \rgt{},
s'_\lft, s'_r \fg$, then $s_\lft \conva s'_\lft$ and $s_r \conva s'_r$.  Since $s$ is a $SGPE$,
$s_r \le_a s_\lft$.

Moreover, since $s_r$ is a $SGPE$, by induction hypothesis, $s'_r~\le~s_r$.  Hence, by
transitivity of $\le_a$, $s'_r \le_a s_\lft$.  But we know that the utility of $s'$ for $a$
is the one of $s'_r$ and the utility of $s$ for $a$ is the one of~$s_\lft$, hence $s' \le_a
s$.
\item The case $s \mathop{=} \og a, \rgt{}, s_\lft, s_r \fg$ and $s' \mathop{=} \og a, \lft,
s'_\lft, s'_r \fg$ is similar.

\end{itemize}
\end{proof}
The above proof is a presentation of the formal proof written with the help of the proof
assistant \Coq.  Notice that it is by induction on $\conva$ which is possible since
$\conva$ is inductively defined.  Notice also that $s$ and $s'$ are potentially infinite.

\ifBook
\begin{landscape}
\begin{figure}
  \centering
     \vspace*{60pt}
  \begin{tiny}
    \prooftree \prooftree \prooftree \justifies\og{ \Alice \mapsto 0, \Bob \mapsto 1}\fg
    \convalice \og{\Alice \mapsto 0, \Bob \mapsto 1}\fg \using \textit{ConvRefl}
    \endprooftree
    \ \ \prooftree \justifies\og{ \ldots}\fg \convalice \og{ \ldots}\fg \using
    \textit{ConvRefl}
    \endprooftree
    \justifies \og \emph{\Bob}, \lft, \og{ \ldots}\fg \og{ \ldots}\fg \fg%
    \convalice %
    \og\emph{\Bob}, \lft, \og\ldots\fg \og{ \ldots}\fg \fg \using \textit{ConvAgent}
    \endprooftree
    \ \ \ \prooftree \justifies \og{\Alice \mapsto 2, \Bob \mapsto 0}\fg \convalice
    \og{\Alice \mapsto 2, \Bob \mapsto 0}\fg \using \textit{ConvRefl}
    \endprooftree
    \justifies \og\emph{\Alice},\lft,\og\emph{\Bob}, \lft, \og{ \ldots}\fg, %
    \og{\ldots}\fg \fg, \og \ldots \fg\fg \convalice \og\emph{\Alice},\rgt,\og\emph{\Bob},
    \lft, \og{ \ldots}\fg, \og{\ldots}\fg \fg, \og \ldots \fg\fg \using
    \textit{ConvChoice}
    \endprooftree
  \end{tiny}
  \caption{An inductive proof of convertibility}
  \label{fig:ind_proof_conv}
\end{figure}
\end{landscape}
\fi
\ssection{Escalation}{Escalation}
\label{sec:escalation}

A game is susceptible to escalation or not. Obviously the possibility of an escalation in
a game requires the game to be infinite.

\sssection{Escalation informally}{Escalation informally}

Escalation is a property of an infinite game, which says that a game can contain an
infinite path along which players always act rationally.  In other words, it says that
at each turn in the game, there exists a strategy profile which is a subgame perfect
equilibrium,  in which the player who has the turn continues.  Since at each turn,
continuing is rational for each player, this means that there is a possibility for players
acting rationally to continue forever.  That there is an infinite path means that there exists
an infinite sequence of games which are direct subgames of their predecessors.  For each
game of this sequence, there exists a strategy profile which has this game as a skeleton,
which is a subgame perfect equilibrium and in which the player who has the turn continues.

\sssection{Escalation as a formal property}{Escalation as a formal property}

The property of having an escalation can be formalized.  A game $g$ \emph{has an
  escalation} if there exists an escalation sequence~$(\gs_n)_{n\in \nat}$ which is a
sequence of subgames of $g$ along the escalation with the following properties: for all
$n$ there are two strategy profiles $s$ and $s'$ and an agent $a$ such that
\begin{itemize}
\item the game associated with the strategy profile $\og a,\lft, s, s'\fg$ is bisimilar to the game
  $\gs_{n}$ and is a \emph{subgame perfect equilibrium} or the game associated with the strategy profile $\og a,\rgt, s', s\fg$ is bisimilar to the game
  $\gs_{n}$ and is a \emph{subgame perfect equilibrium} 
\item the game associated with the strategy profile $s$ is bisimilar to the game
  $\gs_{n+1}$, (this insures that $\gs_{n+1}$ is bisimilar to a direct subgame of $\gs_n$, more
  precisely that $\gs_n$ is bisimilar to the game $\langle a, \gs_{n+1}, s2g(s')\rangle$ or to
  the game $\langle a, s2g(s'), \gs_{n+1}\rangle$, according to the choice made in the above condition).
\end{itemize}
This can be made completely formal, by writing it in \Coq:

\noindent
\coqdockw{Definition} \coqdocid{has\_an\_escalation\_sequence}
(\coqdocid{g\_seq}:\coqdocid{nat} \ensuremath{\rightarrow} \coqdocid{Game}):
\coqdocid{Prop} := \\
\ensuremath{\forall} \coqdocid{n}:\coqdocid{nat}, 
\ensuremath{\exists} \coqdocid{s},  \ensuremath{\exists} \coqdocid{s'}, \ensuremath{\exists} \coqdocid{a}, \coqdoceol
\coqdocindent{1.00em}
(\coqdocid{s2g} \og\coqdocid{a},\coqdocid{l},\coqdocid{s},\coqdocid{s'}\fg
~$\gbis$~ \coqdocid{g\_seq} \coqdocid{n} \ensuremath{\land} \coqdocid{SGPE}
\og\coqdocid{a},\coqdocid{l},\coqdocid{s},\coqdocid{s'}\fg \ \ensuremath{\lor}\coqdoceol
\coqdocindent{1.00em}
\coqdocid{s2g} \og\coqdocid{a},\coqdocid{r},\coqdocid{s'},\coqdocid{s}\fg
~$\gbis$~ \coqdocid{g\_seq} \coqdocid{n} \ensuremath{\land} \coqdocid{SGPE}
\og\coqdocid{a},\coqdocid{r},\coqdocid{s'},\coqdocid{s}\fg) \ \ensuremath{\land}\coqdoceol
\coqdocindent{1.00em}
\coqdocid{s2g} \coqdocid{s} ~$\gbis$~ \coqdocid{g\_seq} (\coqdocid{n}+1).\coqdoceol

\medskip
\noindent
\coqdockw{Definition} \coqdocid{has\_an\_escalation} (\coqdocid{g}:\coqdocid{Game}) : \coqdocid{Prop} :=\coqdoceol
\coqdocindent{2.00em}
\ensuremath{\exists} \coqdocid{g\_seq}, (\coqdocid{has\_an\_escalation\_sequence} \coqdocid{g\_seq}) \ensuremath{\land} (\coqdocid{g\_seq} 0 = \coqdocid{g}).\coqdoceol
\coqdocindent{1.00em}
\coqdoceol

\chapitre{Case studies}{Case studies}
\label{chap:case_studies}

In this section we study several kinds of games that have some analogies, especially they
have a centipede shape, since they have an infinite backbone (on the ``left'') and all the
right subgames are leaves.  In the two last cases, the utilities go to infinity, but in
the second (dollar auction game) the utilities go to $(-\infty,-\infty)$ (costs, \ie the
opposites of utilities, go to $(+\infty,+\infty)$), whereas in the third, (infinipede
game) the utilities go to $(+\infty,+\infty)$.   

\ssection{The $0, 1$ game}{The $0, 1$ game}
\label{sec:0-1-game}\index{$0,1$ game}

In the $0,1$ game (Figure~\ref{fig:zone}, p.~\pageref{fig:zone}) $0$ and $1$ are
payoffs. The $0,1$ game has many subgame perfect equilibria, namely the strategy profiles
in which Alice continues always and Bob stops infinitely often and the strategy profiles
in which Bob continues always and Alice stops infinitely often.

\sssection{Two simple subgame perfect equilibria}{Two simple subgame perfect equilibria}
\label{sec:two-simple-subgame}

For what we are interested in, we can consider two strategy profiles, one in each
category:
\begin{itemize}
\item the strategy profile \emph{``\Alice{} continues always and \Bob{} stops always''}, which
  we call \textsf{z1AcBs} and,
\item the strategy profile \emph{``\Alice{} stops always and \Bob{} continues always''}, which
  we call \textsf{z1AsBc}.
\end{itemize}
The reasoning to show that \textsf{z1AcBs} is a subgame perfect equilibrium works as
follows. In this strategy profile \Alice{} gets $1$ and \Bob{} gets $0$. Assume the strategy
subprofile of \textsf{z1AcBs} after the second turn for \Alice{}\footnote{which is nothing
  but \textsf{z1AcBs}!} is a subgame perfect equilibrium, for which \Alice{} gets $1$ and \Bob{}
gets $0$.  The strategy subprofile that starts at \Bob{}'s turn and which we call
\textsf{sg\_z1AcBs} is a subgame perfect equilibrium for which \Alice{} gets $1$ and \Bob{} gets
$0$, since its two strategy subprofiles are subgame perfect equilibria for which \Alice{}
gets $1$ and \Bob{} gets~$0$.  \textsf{z1AcBs} is a subgame perfect equilibrium since its two
strategy subprofiles are two subgame perfect equilibria, one is \textsf{sg\_z1AcBs} for
which \Alice{} gets $1$ and the other is a leaf for which \Alice{} gets $0$.

The same reasoning applies to \textsf{z1AsBc} to prove that it is a subgame perfect
equilibrium.

\sssection{Cutting the game and extrapolating}{Cutting the game and extrapolating}
\label{sec:cutting-game}

If one cuts the $0,1$~game at a finite position, to obtain a finite game, one can cut either
after \Alice{} like on the left below or on can cut after \Bob{} like on the right below:

\[\begin{psmatrix}[colsep=5pt,rowsep=9pt]
& [name=z]& {\ovalnode{a}{\scriptscriptstyle \Alice}} && [name=b] {\quad\scriptscriptstyle 1,0} \\
&& [name=c]  {\scriptscriptstyle 0,1} 
\ncline[arrows=->]{z}{a} 
\ncarc[arrows=->]{a}{b} \Aput{\scriptstyle \mathsf{c}}
\ncarc[arrows=->]{a}{c} \Aput{\scriptstyle \mathsf{s}}
\end{psmatrix}
\qquad \qquad\qquad \qquad
\begin{psmatrix}[colsep=5pt,rowsep=9pt]
& [name=z]& {\ovalnode{a}{\scriptscriptstyle \Bob}} && [name=b] {\quad\scriptscriptstyle 0,1} \\
&& [name=c]  {\scriptscriptstyle 1,0} 
\ncline[arrows=->]{z}{a} 
\ncarc[arrows=->]{a}{b} \Aput{\scriptstyle \mathsf{c}}
\ncarc[arrows=->]{a}{c} \Aput{\scriptstyle \mathsf{s}}
\end{psmatrix}
\]
When one cuts after \Alice, the backward induction equilibrium is when \Alice{} continues
always and \Bob{} does whatever he wants and when one cuts after \Bob, the backward
induction equilibrium is when \Alice{} does whatever she wants and \Bob{} continues
always.  One sees that those equilibria cannot be extrapolated to the infinity since they
are inconsistent except in the only case when \Alice{} and \Bob{} continue forever.  But this
strategy profile cannot be a subgame perfect equilibrium, since it does not fulfill the
predicate \altl.  Hence cutting the game at
a finite position gives no clue on what one obtains at the limit on the infinite game.

\sssection{The $0,1$ game has a rational escalation}{The $0,1$ game has a rational escalation}
\label{sec:zone-has-an}

One sees that \zone{} has a rational escalation. Indeed at each step, the agents can
always make the choice to continue which is rational since this corresponds to the first
choice of a subgame perfect equilibrium.  If both agents make the choice to continue, this
is the escalation.  \Alice{} continues since she feels that despite \Bob{} did not stop,
he will eventually stop and \Bob{} continues because he feels that \Alice{} will
eventually stop.

\ssection{The dollar auction}{The dollar auction game}
\label{sec:dol}\index{dollar auction}

\index{escalation}
The dollar auction has been presented by \citet{Shubik:1971} \index{Shubik} as the
paradigm of escalation, insisting on its paradoxical aspect. It is a sequential game
presented as an auction in which two agents compete to acquire an object of value $v$
($v>0$) (see \citet[Ex.  3.13]{gintis00:_game_theor_evolv}).  Suppose that both agents bid
$\$ 1$ at each turn. If one of them gives up, the other receives the object and both pay
the amount of their bid.\footnote{In a variant, each bidder, when he bids, puts a dollar
  bill in a hat or in a piggy bank and their is no return at the end of the auction.  The
  last bidder gets the object.}  For instance, if agent \Alice{} stops immediately, she
pays nothing and agent \Bob{}, who acquires the object, has a payoff $v$.  In the general
turn of the auction, if \Alice{} abandons, she looses the auction and has a payoff $-n$
and \Bob{} who has already bid $-n$ has a payoff $v-n$.  At the next turn after \Alice{}
decides to continue, bids $\$ 1$ for this and acquires the object due to \Bob{} stopping,
\Alice{} has a payoff $v-(n+1)$ and \Bob{} has a payoff $-n$.  In our formalization we
have considered the \emph{dollar auction} up to infinity.  Since we are interested only by
the ``asymptotic'' behavior, we can consider the auction after the value of the object has
been passed and the payoffs are negative.  The dollar auction game can be summarized by
Figure~\ref{fig:dol_auct}.  Notice that we assume that {\Alice} starts.  

\sssection{Equilibria in the dollar auction}{Equilibria in the dollar auction}
\label{sec:equil-doll-auct}

We have
recognized three classes of infinite strategy profiles, indexed by~$n$:

\figdollar

\begin{enumerate}
\item The strategy profile \emph{always give up}, in which both {\Alice} and \Bob{} stop at each turn, in short \textsf{dolAsBs}$_n$.
\item The strategy profile \emph{{\Alice} stops always and \Bob{} continues always}, in short \textsf{dolAsBc}$_n$.
\item The strategy profile \emph{\Alice{} continues always and \Bob{} stops always}, in short \textsf{dolAcBs}$_n$.
\end{enumerate}
The three kinds of strategy profiles are presented in Figure~\ref{fig:4_strat}.

\figfourstrat

In the figures like in the \Coq{} implementation, we use \emph{costs}\footnote{Recall that
  cost is the opposite of utility.} instead of payoffs or utilities, since it is simpler in
the \Coq{} formalization to reason on natural numbers.

We have shown\footnote{The proofs are typical uses of the \Coq{} \textsf{cofix} tactic. } that the second and third kinds of strategy profiles, in
which one of the agents always stops and the other continues, are subgame perfect equilibria.  For instance, consider the strategy profile
\textsf{dolAsBc}$_n$.  Assume $\mathit{SGPE}(\textsf{dolAsBc}_{n+1})$.  It works as follows: if $\textsf{dolAsBc}_{n+1}$ is a subgame perfect
equilibrium corresponding to the payoff ${-(v+n+1), -(n+1)}$, then
\[\og \Bob, \lft, \textsf{dolAsBc}_{n+1}, \og \Alice "|->" -(n+1), \Bob "|->" -(v+n)\fg \fg\] 
is again a subgame perfect equilibrium (since  $-(n+1) \ge -(v+n)$) and therefore $\textsf{dolAsBc}_{n}$ which is

\medskip
\(\og \Alice, \rgt{}, \og\Bob, \lft, \textsf{dolAsBc}_{n+1}, \og \Alice "|->" -(n+1), \Bob "|->" -(v+n)\fg , \\
\hspace*{65pt} \og \Alice "|->" -(v+n), \Bob "|->" -n\fg \fg\)
\medskip

\noindent is a subgame perfect equilibrium, since again
$-(v+n) \ge -(v+n+1)$.\footnote{Since the
  \textsf{cofix} tactic has been used on a strict strategy subprofile, the reasoning is correct.}  We can conclude that for all $n$, $\textsf{dolAsBc}_{n}$ \emph{is a subgame
  perfect equilibrium}.  In other words, we have assumed that $\mathit{SGPE}(\textsf{dolAsBc}_{n})$ is an \emph{invariant} all along the game and that this invariant is preserved
as we proceed backward, through time, into the game.

With the condition $v>1$, we can prove that \textsf{dolAsBs}$_0$ is not a Nash
equilibrium, then as a consequence not a subgame perfect equilibrium.  Therefore, the
strategy profile that consists in stopping from the beginning and forever is not a Nash
equilibrium, this contradicts what is said in the literature
\citep*{Shubik:1971,oneill86:_inten_escal_and_dollar_auction,leininger89:_escal_and_coop_in_confl_situat,gintis00:_game_theor_evolv},
with a finite game to approximate an infinite game (the escalation).

\sssection{Escalation is rational}{Escalation is rational in the dollar auction}
\label{sec:esc}

Agents are rational when they choose at each step, what they feel to be their best option.
Many authors agree\footnote{See however
  \citep{halpern01:_subst_ration_backw_induc,stalnaker98:_belief_revis_in_games}.} that
rationality is choosing a subgame perfect equilibrium.  
\cite{aumann95} is one of the strongest advocate of this position.  His principle in the
dollar auction says that a rational agent will choose at each step one of the strategy
profiles which is a subgame perfect equilibrium, namely \textsf{dolAsBc}$_n$ or
\textsf{dolAcBs}$_n$.  Suppose \Alice{} is in the middle of the auction, she has two
options that are rational: one option is to take \Bob's threat seriously and to stop right
away, since she assumes that \Bob{} will continue always (strategy profile
$\textsf{dolAsBc}_n$).  But in her second option, she admits that from now on {\Bob} will
stop always (strategy profile $\textsf{dolAcBs}_n$) and she will always continue: this is
a subgame perfect equilibrium, hence she is rational.  If {\Bob} acts similarly this is
the escalation.  So at each step an agent can stop and be rational, as well as at each
step an agent can continue and be rational; both options make sense, and the escalation as well.

\AliceBob

We claim that human agents reason coinductively unknowingly.  Therefore, for them,
continuing always is one of their rational options at least if one considers strictly the
rules of the dollar auction game with no limit on the bankroll.  If at all steps, both
agents continue always, this is the escalation.  Many experiences (see
\citep{colman99:_game_theor_and_its_applic} for a survey) have shown that human are
inclined to escalate or at least to go very far in the auction when playing the dollar
auction game.  We propose the following explanation: the finiteness of the game was not
explicit for the participants and for them the game was naturally infinite.  Therefore
they adopted a kind of reasoning similar to the one we develop here, probably in an
intuitive form and they conclude it was equally rational to continue or to leave according
to their comprehension of the threat of their opponent.  Actually our theoretical work
reconciles experiences with logic,\footnote{A logic which includes coinduction.} and human
reasoning with rationality.

If agents would have a global view, they would notice that escalation is scarring and
ruining and should be avoided.  Since it is rational it looks ineluctable, but there are a
few ways to avoid it.  First, the game can be stopped by the action of an external
observer, a kind of master of ceremony who declares the game over, something similar to
the finite payroll.  Second, since it is a consequence of the absence of communication
between agents, it can be avoided, if the agents communicate, like Kennedy and Khrushchev
did with the Moscow-Washington hotline.  Third, escalation can be stopped by introducing
another challenge, like building Europe by Adenhauer and de Gaulle.

\ssection{The infinipede}{The infinipede}
\label{sec:anot-exampl-infin}

Often studied, the extensive game called \emph{centipede}\footnote{ A centipede has hundred legs, whereas a millipede has thousand. All belong to the
  group of myriapods which means ``ten thousand legs''.} has been introduced by \cite{rosenthal81:_games_of_perfec_infor_predat} (see also
\cite{binmore87:_model_ration_player,Colman_rat_back_ind,osborne94:_cours_game_theory}). In \citet[p. 96]{rosenthal81:_games_of_perfec_infor_predat}
the game is  pictured as shown in Figure~\ref{fig:rosen}.

\rosenthal

This finite game has one Nash equilibrium obtained by backward induction, namely by agent
$A$ stopping immediately.  This game has been extended to hundred, thousand nodes, but all
those extensions are finite and all authors conclude that there is a unique Nash
equilibrium in which the agents give up immediately.

Since we had noticed, in the case of $0,1$~game and the dollar auction, a discrepancy between Nash
equilibria when going from finite games to infinite games, it is challenging to see
whether the same phenomenon occurs when going from the \emph{centipede} (a finite game)
to the \emph{infinipede} (an infinite game).

The \emph{infinipede} is an infinite game in extensive form in which agent $A$ has the
choice to continue or to end the game with both agents receiving a payoff $2n$ and agent
$B$  has the choice to continue or to end the game with agent $A$ receiving
$2n-1$ and agent $B$ receiving $2n+3$.  The generalization is an infinite game which can
be pictured as follows (we use subtraction over naturals in which $2*0 - 1 = 0$):

\infinipede

In infinipedes, we have identified only one subgame perfect equilibrium, namely this where
both agents abandon at each turn (Figure~\ref{fig:infSGPE}).  We call it
\textsf{cent\_agu}~$n$.
\infSGPE
This shows that even in the infinite generalization, agents are rational if they do not
start the game and abandon from the beginning.  We have actually shown this property for
each value of $n$, namely for all $n$, \textsf{cent\_agu}~$n$ is a subgame perfect
equilibrium.  In \Coq{} we have proved the theorem:

\medskip
\noindent
\qquad \coqdockw{Theorem} \coqdocid{SGPE\_cent\_AGU}:  \ensuremath{\forall} (\coqdocid{n}:\coqdocid{nat}), \coqdocid{SGPE} \coqdocid{le} (\coqdocid{cent\_agu} \coqdocid{n}).\coqdoceol
\medskip
\noindent Hence the paradox remains: the agents do not get the somewhat better payoff,
they would get if they would be more flexible with respect to rationality.  However, there
remains a problem for the agents in the infinipede game: when they start an infinite game,
they do not know when to stop.  After all, ending immediately is perhaps a good choice to
solve this dilemma.

\chapitre{Related works}{Related works}
\label{sec:rel_works}

To our knowledge, the only application of coinduction to extensive game theory has been
made by \cite{capretta:2007} who uses coinduction to define only common knowledge not
equilibria in infinite games.  Another strongly connected work is this of
\citet{coupet-grimal03:_axiom_of_linear_temp_logic} on temporal logic.  Other applications
are on representation of real numbers by infinite sequences
\citep{bertot07:_affin_funct_and_series_with,DBLP:conf/flops/Julien08} and implementation
of streams (infinite lists) in electronic circuits
\citep{DBLP:journals/fac/Coupet-GrimalJ04}.  An ancestor of our description of infinite
games and infinite strategy profiles is the constructive description of finite games,
finite strategy profiles, and equilibria by \citet{vestergaard06:IPL}.
\citet{DBLP:journals/corr/abs-0904-3528} introduces the framework of infinite games.
Infinite games are introduced in \citet{osborne94:_cours_game_theory} and
\citet{osborne04a} using histories, but this is not algorithmic and therefore not amenable
to formal proofs and coinduction.  Exercice 175.1 of \citet{vestergaard06:IPL} is the
dollar auction and no infinite subgame perfect equilibrium is considered.

Many authors have studied infinite games (see for instance
\citet{martin98:_deter_of_black_game,DBLP:conf/dagstuhl/Mazala01}), but except the name
``game'' (an overloaded one), those games have nothing to see with infinite extensive
games as presented in this \booksecpoint  The infiniteness of Blackwell games for instance is
derived from a topology, by adding real numbers and probability.
\cite{DBLP:journals/toplas/Sangiorgi09} mentioned the connection between
Ehrenfeucht-Fra{\"i}ss{\'e} games \citep{EF-finite-mt} and coinduction, but the connection
with extensive games is extremely remote.

This work started after this of \cite{vestergaard06:IPL} on finite games and finite
strategy profiles.  We first developed proofs on finite strategy profiles, but unlike
Vestergaard who based his formalization on fixpoint definitions of predicates, we used
only inductive definitions of predicates. Like Vestergaard, we were able to prove the main
lemma of finite extensive games, namely that \emph{backward induction strategy profiles
  are Nash equilibria}; the script is available at
\url{http://perso.ens-lyon.fr/pierre.lescanne/COQ/INFGAMES/SCRIPTS/finite_games.v}.

Overall, this ``induction based'' presentation allowed us to switch more easily to
coinduction on infinite games.  Beside this, a development in \Coq{} of finite games with
an arbitrary number of choices at any node has been made by \citet[p. 83 and
following]{LeRouxPhD08} and an exploration of common knowledge, induction and Aumann's
theorem on rationality has been proposed by
\citet{vestergaard06:_lescan_ono}. In~\cite{lescanne07:_mechan_coq}, there is a
presentation of a somewhat connected development in \Coq, namely this of the \emph{logic
  of common knowledge}.

Since we are talking about some computational aspects of games, people may make some analogies with other works, let us state what extensive games are not.
\begin{itemize}
\item Extensive games are not \emph{semantic games} as presented in \cite{s.abramsky94:_games_and_full_compl_for,lorenz78:_dialog_logik,Locus_Solum_966909,benthem06:_logic_in_games}.

\item Extensive games are not \emph{logical games} used in proving properties of automata and protocols \cite{DBLP:conf/tphol/Merz00,DBLP:conf/provsec/AffeldtTM07}.
\item This work has only loose connection with \emph{algorithmic game theory} \cite{1296179,daskalakis09:_compl_of_comput_nash_equil}, which is more interested by the complexity of
  the algorithms, especially those which compute equilibria, than by their correction, and does not deal with infinite games.
\item Extensive games are not Ehrenfeucht-Fra\"iss\'e games, but as reported by
  \cite{DBLP:journals/toplas/Sangiorgi09} they are related to coinduction trough
  bisimilation.
\end{itemize}

The book of \cite{dowek07:_les_metam_du_calcul} gives a philosophical perspective of using
a proof assistant based on type theory in mathematics.

\chapitre{Postface}{Postface}
\label{cha:postface}

\begin{center}
  \doublebox{
    \parbox{.75\textwidth}{ In a world of \textbf{finite resources} \emph{\textbf{escalation is irrational.}}

      \medskip In a world of \textbf{infinite resources} \emph{\textbf{escalation is rational.}}}}
\end{center}

\bigskip 

Two words are more or less synonyms: escalation and (speculative) bubble which both lead
to a crash. They all yields the same outcome: a violent change in the economy, from a
growth to a sudden drop.  The main question is to know whether this attitude is
wise or whether this is a consequence of the madness of men that Newton was unable to
explain.   Rationality depends on the view agents have of the world.  This view is in terms of
availability of resources.

\ssection{Finite resources vs infinite resources}{Finite resources vs infinite resources}
\label{sec:view-finite-world}

The preceding discussion shows that the relation between rationality and escalation is
connected with the perception of the finiteness of the world and/or the (finite or infinite) quantity of
available resources.  Actually people can be split into two categories:
\begin{itemize}
\item People who view the resources as finite take escalation as irrational.
\item People who view the resources as infinite take escalation as rational.
\end{itemize}

In the first category, we can put the environmentalists, the Club of Rome, Al
Gore\footnote{This division exists among specialiste of set theory, \ie among those who
  accept the axiom of well-foundedness \citep{neumann28:_zur_theor_gesel} and those who
  reject it \citep{aczel88:_non_well_found_sets}.}  and in the second category the
speculators, the gamblers, the risk takers, the Concorde project managers, Lyndon Johnson
and Robert McNamara, Bernard Madoff, John Allen Paulos, Bernie Ebbers (WorldCom CEO),
Kim-Jong-Il, Muammar Gaddafi, Bashar al-Assad, the Greece rulers, to cite a few.  We claim
that the pros and cons of infinite resources are the same as the pros and cons of
escalation.  Perhaps, the reader like the author may think that opting for the first
category is wise, but the second option has also many fans, since escalation is everywhere
in the current world.  Those persons are not mad and have their own rationality.

\ssection{Ubiquity of escalation}{Ubiquity of escalation}

Escalation appears in many fields.

\begin{description}
\item[In economy] The most amazing example of escalation is \emph{speculative or economic
    bubble}, a well-known and old phenomenon
  \citep{kindleberger2005manias}. \citet{blanchard82:_crises} analyse the rationality of
  such bubbles, but their perpective is slightly different.  Let us assume like them, that
  rationality is the way agents take into account \emph{rational expectation} from
  uncertainty.  For us, uncertainty is non determinism or more precisely what lies between a fully
  non deterministic future and a fully stochastic one.  Since \citet{blanchard82:_crises}
  base their analysis on probability only (a full stochastic future), the accuracy of
  their approach is unsure.  In our framework, agents know only a non deterministic future
  and we explain how they reason based on this uncertain knowledge.  Their reasoning uses
  especially sequential game theory and coinduction.

  \citet{paulos03:_mathem_plays_stock_market} presents a case study on escalation, namely
  how despite being a mathematician he kept buying WorldCom stock from early 2000, when it
  was \$47 per share, till April 2002, when it was \$5.  Incidentally, he noticed that following
  an escalation process WorldCom acquired Digex in June 2001 for more than 120 times its
  actual value.

\item[In evolution theory] The \emph{red queen hypotheses} and the survival of species.
  Assume two species are living together and compete for resource.  The only way for both
  species to survive is to increase their fitness.  This lead to a kind of arm race or a
  kind of escalation.  In other words, for species to survive (in an infinite world), it
  is necessary to escalate, which is sensible in a presentation of the species competition
  by game theory.
\item[In justice] It is often the case that in court, a succession of cases and appeals lead
  to an escalation. The British case \emph{McDonald's Restaurants vs Morris \& Steel}
  \citep{vidal97:_mclib_burger_cultur_trial} is typical in this respect.  The more
  McDonald's kept suiting, the more it was loosing.
\item[Polemology] Perhaps war and conflict is where the concept of escalation was first
  developed.  Hitler's trajectory from the Beer Hall Putsch to the Battle of Stalingrad
  and eventually to his suicide was a typical escalation.  Muammar Gaddafi and Bashar
  al-Assad Are contemporary examples.

\end{description}

\ssection{Cogntive psychology}{Escalation and cognitive psychology}
\label{sec:escal-congnt-psych}

It is worth to wonder whether the agents we consider are really rational and whether they
are as rational as we would like them to be.  \citet{stanovich2010intelligence} discusses
rationality from the point of view of cognitive psychology.  Roughly
speaking a rational agent owns a mindware which refers to the rules, knowledge, procedures
and strategies that a person can retrieve from memory to aid decision making and problem
solving.  Of course we suppose that the mindware contains coinduction or at least a set of
deduction rules which covers the power of coinduction, that is which is able to reason on
infinite mathematical objects as coinduction does.

We distinguish two kinds of rationality, from the more elementary one to the more
elaborate one.  \emph{Instrumental rationality} is behaving in the world so that you get
exactly what you most want, given the resources (physical and mental) available to you.
Economist and cognitive scientists have refined the notion of optimization of goal
fulfillment into the technical notion of \emph{expected utility}.  On the other hand,
\emph{epistemic rationality} lies above instrumental rationality and interacts with it. It
tells how well the beliefs map onto the actual structure of the world.  More roughly,
epistemic rationality is about what is true and instrumental rationality is about what to
do.  The first kind of rationality corresponds to algorithmic mind and the second form of
rationality corresponds to reflexive mind.  Among others, a reflexive mind is able to
analyze the way it reasons.

An escalating agent has an \emph{algorithmic min}d, but lacks a \emph{reflexive mind},
which would allow him to change his beliefs.  In particular, he should revise his belief
in infinite resources.  Of course, at first, this gives him the power to go on and does
not inhibit him in his rush forward, but as we know `` Errare humanum est, perseverare
diabolicum''.  There is a time when he should understand that believing into an infinite
world of resources leads to a dead-end and it is better-off for him to revise his belief.
Acting so the agents shows his full rationality.

\ssection{Conclusion}{Conclusion}
\label{sec:conclusion}

Thanks to coinduction, we have reconciled human reasoning with rational reasoning in infinite extensive games.  In other words, we claim that human agents reason actually by
coinduction when faced to infinite games and are rational.  Moreover we have shown once more the threshold between finiteness and infiniteness and that reasoning on infinite objects
is not the limit when the size goes to infinity of reasoning on finite objects.



\appendix

\chapitre{Two subtle points}{Two subtle points of coinduction}

\ssection{Equalities}{Equalities}
\label{sec:eq}

\emph{Leibniz equality} says that $x=y$ if and only if, for every predicate $P$, $P(x)$ implies $P(y)$.  \emph{Extensional equality} says that $f=g$ if and only if, for all $x$,
$f(x)=g(x)$. In general, knowing a (recursive) definition of $f$ and a (recursive) definition of~$g$ is not enough to decide whether $f=g$ or $f\neq g$.  For instance, no one knows
how to prove that the two functions:
\begin{eqnarray*}
  f(1)&=&1\\
  f(2x) &=& f(x)\\
  f(2x+1) &=& f(3x+2).
\end{eqnarray*}
and
\[g(x)~=~1\] are equal, despite it is more likely that they are.  More generally, there is
no algorithm (no rigorous reasoning) which decides whether a given function $h$ is equal
to the above function $g$ or not.  Thus \emph{extensional equality} is not decidable.
Saying that two sequences that have equal elements are equal requires \emph{extensional
  equality} and it makes sense to reject such an equality when reasoning finitely about
infinite objects, like human agents would do.

\ssection{No utility in infinite runs}{Why in infinite runs, agents do not have a utility?}
\label{sec:why-infinite-plays}

In an infinite play, a play that runs forever, \ie that does not lead to
a leaf, no agent has a utility.  People might say that this an anomaly, but 
this is perfectly sensible.   In arbitrary long plays, which lead to a
leaf, all agents have a utility.  Only in plays that diverge, it is the case that agents
have no utility.  This fits well with \cite{binmore88:_model_ration_player_part_ii}
statements \emph{``The use of computing machines (automata) to model players in an
  \emph{evolutive} context is presumably uncontroversial ... machines are also appropriate
  for modeling players in an \emph{eductive} context''}.  Here we are concerned by the
eductive context where \emph{``equilibrium is achieved through careful reasoning by the
  agents before and during the play of the game''}
\cite[loc. cit]{binmore88:_model_ration_player_part_ii}.  By automaton, we mean any model
of computation\footnote{Our model of computation is this of the calculus of inductive
  construction, a kind of $`l$-calculus behind \Coq{}
  \citep{turing37:_comput_and_lambda_defin}.}, since all the models of computation are
equivalent by Church thesis.  If an agent is modeled by an automaton, this means also that
the function that computes the utility for this agent is also modeled by an automaton.  It
seems then sensible that one cannot compute the utility or the cost of an agent for an
infinite play, since computing is a finite process working on finite data (or at least
data that are finitely described).  Since the agent cannot compute the utility of an
infinite play, no sensible value can be attributed to him.  If one wants absolutely to
assign a value to an infinite play, one must abandon the automaton framework.  Moreover
this value should be the limit of a sequence of values, which does not exist in most of
the cases.\footnote{If utilities are natural numbers, it exists only if the sequence is
  stationary, which is not the case in escalation.}  For instance, in the case of the
dollar auction (Section~\ref{sec:dol}), the costs of \Alice{} associated with the infinite
play are the sequence $..., v+n, n+1, v+n+1, n+2, ...$ In the case of the $0,1$ game the payoff
of \Alice{} is $0, 1, 0, 1, ...$.  Therefore considering that in infinite plays, agents have
no utility or costs is perfectly consistent with a modeling of agents by automata.  By the
way, does an agent care about a payoff he (she) receives in infinitely many years?  Will
he (she) adapt his (her) strategy on this?

\chapitre{About \Coq{}} {About the \Coq{} development} 
\label{chap:coq}

\ssection{Functions in the \Coq{} vernacular}{The notation of functions in the \Coq{} vernacular}

In traditional mathematics, the result of applying a function $f$ to the value $x$ is
written $f(x)$ and the result of applying $f$ to $x$ and $y$ is written $f(x,y)$, this can
be considered as the result of applying $f$ to $x$ then to $y$ and written $f(x)(y)$.  In
the \Coq{} vernacular, as in type theory, one writes $f~x$ instead of $f(x)$ and $f~x~y$
instead $f(x)(y)$ or instead of $f(x,y)$ and $f~x~y~z$ instead $f(x)(y)(z)$ or $f(x,y,z)$,
because this saves parentheses and commas and because the concept of functions is the core
of the formalization.  But after all, this is just a matter of style and \Coq{} accepts
syntactic shorthands to avoid these notations when others are desirable.

\ssection{Coinduction in \Coq}{Coinduction in \Coq}
\label{sec:coq}

As we have said the proof assistant \Coq{} (\cite{Coq:manual}) plaid a central role in this research. 

\ssssection{Why should we formalize a concept in a proof assistant?}

To answer this question we like to cite Donald Knuth \citep{shustek08:_inter_donal_knuth}:
\begin{it}
  \begin{quotation}
    People have said you don't understand something until you've taught it in a class.
    The truth is you don't really understand something until you've taught it to a computer, until you've been able to program it.
  \end{quotation}
\end{it}
We claim that we can appropriately replace the last sentence by \emph{``until you've
  taught it to a proof assistant, until you've code it into
  \Coq,\footnote{\cite{Coq:manual}.}
  Isabelle,\footnote{\cite{Nipkow-Paulson-Wenzel:2002}.} or
  PVS\footnote{\cite{cade92-pvs}.}''} as it seems even more demanding to ``teach'' a proof
assistant like \Coq{} than to write a program on the same topics.  Actually without
\Coq{}, which has coinduction features, we would not have been able to capture the
concepts of Nash equilibrium and Subgame Perfect equilibria presented in this
\booksecpoint This is indeed the result of formal deduction, intuition and try and error
in \Coq{} since proving properties of infinite games and infinite strategy profiles is
extremely subtle.  Moreover by relying on a proof assistant, we can free this
\bookarticle{} from formal developments and tedious and detailed proofs, knowing anyway
that they are correct in any detail and that the reader will refer to the \Coq{} script in
case of doubt. Therefore, we can focus on informal explanation.  However, \Coq{} proposes
a readable, rigorous, and computer checked syntax, the \emph{vernacular}, for definitions,
lemmas and theorems and when we provide definitions in this \booksecvirg they are
associated with expressions stated in the vernacular provide in the appendix.  The
vernacular should be seen as a XXI$^{st}$ century version of Leibniz' \emph{characterica
  universalis} or Frege's \emph{Begriffsshrift} \citep{frege67:_from_frege_to}.

\ssssection{Decomposing an object}

The principle of coinduction is based on the greatest fixpoint of the definition, that is a \emph{coinduction defines a greatest fixpoint} (see \citep{BertotCasterant04}).  There are
two challenges when one works with such a principle: the difficulty of decomposing infinite objects and the invocation of coinduction.  They are both presented in detail
by \cite{BertotCasterant04}, but let us describe them in a few words. For the first problem, suppose one has a strategy profile $s$, which is not a leaf; one knows that $s$ is of the form
${\og{a,c, s_l,s_r}\fg}$ for some agent $a$, some choice $c$ and some strategy profiles $s_l$ and $s_r$.  To obtain such a presentation, one uses a mechanism which consists in defining a
function identity on strategy profile which is a ``clone'' of $fun\ s "=>" s\ end$:\pagebreak[3]

\medskip
\noindent
\coqdockw{Definition} \coqdocid{Strategy\_identity} (\coqdocid{s}:\coqdocid{Strategy}): \coqdocid{Strategy} :=\coqdoceol
\noindent
\coqdocid{match} \coqdocid{s} \coqdocid{with}\coqdoceol
\noindent
| $\og$\coqdocid{f}$\fg$ \ensuremath{\Rightarrow}  $\og$\coqdocid{f}$\fg$\coqdoceol
\noindent
| $\og$\coqdocid{a},\coqdocid{c},\coqdocid{sl},\coqdocid{sr}$\fg$ \ensuremath{\Rightarrow} $\og$\coqdocid{a},\coqdocid{c},\coqdocid{sl},\coqdocid{sr}$\fg$\coqdoceol
\noindent
\coqdocid{end}.\coqdoceol

\medskip

\noindent In other words, the \emph{strategy identity} function is, computationally speaking, the function which associates $\og$\coqdocid{f}$\fg$ with $\og$\coqdocid{f}$\fg$ and
$\og$\coqdocid{a},\coqdocid{c},\coqdocid{sl},\coqdocid{sr}$\fg$ with $\og$\coqdocid{a},\coqdocid{c},\coqdocid{sl},\coqdocid{sr}$\fg$ and not the function which associates $s$ with $s$.
We can prove the \emph{strategy decomposition} lemma:

\medskip
\noindent
\coqdockw{Lemma} \coqdocid{Strategy\_decomposition}: \ensuremath{\forall} \coqdocid{s}: \coqdocid{Strategy},\coqdoceol
\coqdocindent{1.00em}
\coqdocid{Strategy\_identity} \coqdocid{s} = \coqdocid{s}.\coqdoceol

\medskip

Thus when one wants to decompose a strategy $s$, one replaces $s$ by \coqdocid{Strategy\_identity} $s$ and one simplifies the expression, and one gets ${\og a,c, s_l,s_r\fg}$ for
some $a$, $c$, $s_l$ and $s_r$.

\ssssection{ Invoking coinduction}

The \emph{principle of coinduction} is based on a \emph{tactic}\footnote{A \emph{tactic} is a tool in \Coq{} used to build proofs without using the most elementary constructions.}
called \emph{cofix}.  It consists in assuming the proposition one wants to proof, provided one applies it only on strict sub-objects. In the current implementation of \Coq{}, the
user has to ensure that he invokes it on ``strict'' sub-objects.  This is not always completely trivial and requires a good methodology.  However the \emph{proof checker} (a piece
of software which accepts only correct proofs) verifies that this constraint is fulfilled at the time of checking the proof.

\ssection{Excerpts of the \Coq{} development}{Excerpts of the \Coq{} development}
\label{sec:excerpts-coq-devel}

The full development is  in the url \\\centerline{\url{http://perso.ens-lyon.fr/pierre.lescanne/COQ/ER/SCRIPTS/}}
with a description in
\\\centerline{\url{http://perso.ens-lyon.fr/pierre.lescanne/COQ/ER/HTML/}.}   

\sssection{Infinite binary trees}{Infinite binary trees}

\noindent
\coqdockw{CoInductive} \coqdocid{InfFinBintree} : \coqdocid{Set} :=\coqdoceol
\noindent
| \coqdocid{InfFinBtNil}: \coqdocid{InfFinBintree} \coqdoceol
\noindent
| \coqdocid{InfFinBtNode}: \coqdocid{InfFinBintree} \ensuremath{\rightarrow} \coqdocid{InfFinBintree} \ensuremath{\rightarrow} \coqdocid{InfFinBintree}.\coqdoceol

\medskip
\noindent
\coqdockw{CoInductive} \coqdocid{InfiniteInfFinBT}: \coqdocid{InfFinBintree} \ensuremath{\rightarrow} \coqdocid{Prop} := \coqdoceol
\noindent
| \coqdocid{IBTLeft} : \ensuremath{\forall} \coqdocid{bl} \coqdocid{br}, \coqdocid{InfiniteInfFinBT} \coqdocid{bl} \ensuremath{\rightarrow} \coqdocid{InfiniteInfFinBT} (\coqdocid{InfFinBtNode} \coqdocid{bl} \coqdocid{br})\coqdoceol
\noindent
| \coqdocid{IBTRight} : \ensuremath{\forall} \coqdocid{bl} \coqdocid{br}, \coqdocid{InfiniteInfFinBT} \coqdocid{br} \ensuremath{\rightarrow} \coqdocid{InfiniteInfFinBT} (\coqdocid{InfFinBtNode} \coqdocid{bl} \coqdocid{br}).\coqdoceol

\medskip
\noindent
\coqdockw{CoFixpoint} \coqdocid{Zig}: \coqdocid{InfFinBintree} := \coqdocid{InfFinBtNode} \coqdocid{Zag} \coqdocid{InfFinBtNil}\coqdoceol
\noindent
\coqdocid{with} \coqdocid{Zag}: \coqdocid{InfFinBintree} := \coqdocid{InfFinBtNode} \coqdocid{InfFinBtNil} \coqdocid{Zig}.\coqdoceol

\sssection{Infinite games}{Infinite games}

\noindent
\coqdockw{CoInductive} \coqdocid{Game} : \coqdocid{Set} :=\coqdoceol
\noindent
| \coqdocid{gLeaf}:  \coqdocid{Utility\_fun} \ensuremath{\rightarrow} \coqdocid{Game}\coqdoceol
\noindent
| \coqdocid{gNode} : \coqdocid{Agent} \ensuremath{\rightarrow} \coqdocid{Game} \ensuremath{\rightarrow} \coqdocid{Game} \ensuremath{\rightarrow} \coqdocid{Game}.\coqdoceol

\medskip

\sssection{Infinite strategy profiles}{Infinite strategy profiles}

\noindent
\coqdockw{CoInductive}
 \coqdocid{StratProf} : \coqdocid{Set} :=\coqdoceol
\noindent
| \coqdocid{sLeaf} : \coqdocid{Utility\_fun} \ensuremath{\rightarrow} \coqdocid{StratProf}\coqdoceol
\noindent
| \coqdocid{sNode} : \coqdocid{Agent} \ensuremath{\rightarrow} \coqdocid{Choice} \ensuremath{\rightarrow} \coqdocid{StratProf} \ensuremath{\rightarrow} \coqdocid{StratProf} \ensuremath{\rightarrow} \coqdocid{StratProf}.\coqdoceol

\medskip

\noindent
\coqdockw{Inductive} \coqdocid{s2u} : \coqdocid{StratProf} \ensuremath{\rightarrow} \coqdocid{Agent} \ensuremath{\rightarrow} \coqdocid{Utility} \ensuremath{\rightarrow} \coqdocid{Prop} :=\coqdoceol
\noindent
| \coqdocid{s2uLeaf}: \ensuremath{\forall} \coqdocid{a} \coqdocid{f}, \coqdocid{s2u} ($\og$ \coqdocid{f}$\fg$) \coqdocid{a} (\coqdocid{f} \coqdocid{a})\coqdoceol
\noindent
| \coqdocid{s2uLeft}: \ensuremath{\forall}  (\coqdocid{a} \coqdocid{a'}:\coqdocid{Agent}) (\coqdocid{u}:\coqdocid{Utility}) (\coqdocid{sl} \coqdocid{sr}:\coqdocid{StratProf}),\coqdoceol
\coqdocindent{2.00em}
\coqdocid{s2u} \coqdocid{sl} \coqdocid{a} \coqdocid{u}  \ensuremath{\rightarrow} \coqdocid{s2u} ($\og$ \coqdocid{a'},\coqdocid{l},\coqdocid{sl},\coqdocid{sr}$\fg$) \coqdocid{a} \coqdocid{u}  \coqdoceol
\noindent
| \coqdocid{s2uRight}: \ensuremath{\forall} (\coqdocid{a} \coqdocid{a'}:\coqdocid{Agent}) (\coqdocid{u}:\coqdocid{Utility}) (\coqdocid{sl} \coqdocid{sr}:\coqdocid{StratProf}),\coqdoceol
\coqdocindent{2.00em}
\coqdocid{s2u} \coqdocid{sr} \coqdocid{a} \coqdocid{u} \ensuremath{\rightarrow} \coqdocid{s2u} ($\og$ \coqdocid{a'},\coqdocid{r},\coqdocid{sl},\coqdocid{sr}$\fg$) \coqdocid{a} \coqdocid{u}.\coqdoceol

\medskip

\noindent
\coqdockw{Lemma} \coqdocid{Existence\_s2u}:  \ensuremath{\forall} (\coqdocid{a}:\coqdocid{Agent}) (\coqdocid{s}:\coqdocid{StratProf}),\coqdoceol
\coqdocindent{1.00em}
\coqdocid{LeadsToLeaf} \coqdocid{s} \ensuremath{\rightarrow} \ensuremath{\exists} \coqdocid{u}:\coqdocid{Utility}, \coqdocid{s2u} \coqdocid{s} \coqdocid{a} \coqdocid{u}.\coqdoceol

\medskip
\noindent
\coqdockw{Lemma} \coqdocid{Uniqueness\_s2u}: \ensuremath{\forall} (\coqdocid{a}:\coqdocid{Agent}) (\coqdocid{u} \coqdocid{v}:\coqdocid{Utility}) (\coqdocid{s}:\coqdocid{StratProf}),\coqdoceol
\coqdocindent{1.50em}
\coqdocid{LeadsToLeaf} \coqdocid{s} \ensuremath{\rightarrow} \coqdocid{s2u} \coqdocid{s} \coqdocid{a} \coqdocid{u} \ensuremath{\rightarrow} \coqdocid{s2u} \coqdocid{s} \coqdocid{a} \coqdocid{v} \ensuremath{\rightarrow} \coqdocid{u}=\coqdocid{v}.\coqdoceol

\medskip
\noindent
\coqdockw{Inductive} \coqdocid{LeadsToLeaf}: \coqdocid{StratProf} \ensuremath{\rightarrow} \coqdocid{Prop} :=\coqdoceol
\noindent
| \coqdocid{LtLLeaf}: \ensuremath{\forall} \coqdocid{f}, \coqdocid{LeadsToLeaf} ($\og$ \coqdocid{f}$\fg$)\coqdoceol
\noindent
| \coqdocid{LtLLeft}: \ensuremath{\forall} (\coqdocid{a}:\coqdocid{Agent})(\coqdocid{sl}: \coqdocid{StratProf}) (\coqdocid{sr}:\coqdocid{StratProf}),\coqdoceol
\coqdocindent{2.00em}
\coqdocid{LeadsToLeaf} \coqdocid{sl} \ensuremath{\rightarrow} \coqdocid{LeadsToLeaf} ($\og$ \coqdocid{a},\coqdocid{l},\coqdocid{sl},\coqdocid{sr}$\fg$)\coqdoceol
\noindent
| \coqdocid{LtLRight}: \ensuremath{\forall} (\coqdocid{a}:\coqdocid{Agent})(\coqdocid{sl}: \coqdocid{StratProf}) (\coqdocid{sr}:\coqdocid{StratProf}),\coqdoceol
\coqdocindent{2.50em}
\coqdocid{LeadsToLeaf} \coqdocid{sr} \ensuremath{\rightarrow} \coqdocid{LeadsToLeaf} ($\og$ \coqdocid{a},\coqdocid{r},\coqdocid{sl},\coqdocid{sr}$\fg$).\coqdoceol

\medskip
\noindent
\coqdockw{CoInductive} \coqdocid{AlwLeadsToLeaf}: \coqdocid{StratProf} \ensuremath{\rightarrow} \coqdocid{Prop} :=\coqdoceol
\noindent
| \coqdocid{ALtLeaf} : \ensuremath{\forall} (\coqdocid{f}:\coqdocid{Utility\_fun}), \coqdocid{AlwLeadsToLeaf} ($\og$\coqdocid{f}$\fg$)\coqdoceol
\noindent
| \coqdocid{ALtL} : \ensuremath{\forall} (\coqdocid{a}:\coqdocid{Agent})(\coqdocid{c}:\coqdocid{Choice})(\coqdocid{sl} \coqdocid{sr}:\coqdocid{StratProf}),\coqdoceol
\coqdocindent{2.00em}
\coqdocid{LeadsToLeaf} ($\og$\coqdocid{a},\coqdocid{c},\coqdocid{sl},\coqdocid{sr}$\fg$) \ensuremath{\rightarrow} \coqdocid{AlwLeadsToLeaf} \coqdocid{sl} \ensuremath{\rightarrow}\coqdocid{AlwLeadsToLeaf} \coqdocid{sr} \ensuremath{\rightarrow} \coqdoceol
\coqdocindent{2.00em}
\coqdocid{AlwLeadsToLeaf} ($\og$\coqdocid{a},\coqdocid{c},\coqdocid{sl},\coqdocid{sr}$\fg$).\coqdoceol

\sssection{Convertibility}{Convertibility}
\label{sec:Coq-conv}

\noindent
\coqdockw{Inductive} \coqdocid{IndAgentConv}: \coqdocid{Agent} \ensuremath{\rightarrow} \coqdocid{StratProf} \ensuremath{\rightarrow} \coqdocid{StratProf} \ensuremath{\rightarrow} \coqdocid{Prop} :=\coqdoceol
\noindent
| \coqdocid{ConvRefl}: \ensuremath{\forall} (\coqdocid{a}:\coqdocid{Agent})(\coqdocid{s}: \coqdocid{StratProf}), \coqdocid{IndAgentConv} \coqdocid{a} \coqdocid{s} \coqdocid{s}\coqdoceol
\noindent
| \coqdocid{ConvAgent} :  \ensuremath{\forall} (\coqdocid{a}:\coqdocid{Agent})(\coqdocid{c} \coqdocid{c'}:\coqdocid{Choice})(\coqdocid{sl} \coqdocid{sl'} \coqdocid{sr} \coqdocid{sr'}:\coqdocid{StratProf}),\coqdoceol
\coqdocindent{1.00em}
(\coqdocid{IndAgentConv} \coqdocid{a} \coqdocid{sl} \coqdocid{sl'}) \ensuremath{\rightarrow} (\coqdocid{IndAgentConv} \coqdocid{a} \coqdocid{sr} \coqdocid{sr'}) \ensuremath{\rightarrow} \coqdoceol
\coqdocindent{1.00em}
\coqdocid{IndAgentConv} \coqdocid{a} ($\og$\coqdocid{a},\coqdocid{c},\coqdocid{sl},\coqdocid{sr}$\fg$) ($\og$\coqdocid{a},\coqdocid{c'},\coqdocid{sl'},\coqdocid{sr'}$\fg$)\coqdoceol
\noindent
| \coqdocid{ConvChoice} :  \ensuremath{\forall} (\coqdocid{a} \coqdocid{a'}:\coqdocid{Agent}) (\coqdocid{c}: \coqdocid{Choice}) (\coqdocid{sl} \coqdocid{sl'} \coqdocid{sr} \coqdocid{sr'}:\coqdocid{StratProf}),\coqdoceol
\coqdocindent{1.00em}
\coqdocid{IndAgentConv} \coqdocid{a} \coqdocid{sl} \coqdocid{sl'} \ensuremath{\rightarrow} (\coqdocid{IndAgentConv} \coqdocid{a} \coqdocid{sr} \coqdocid{sr'}) \ensuremath{\rightarrow} \coqdoceol
\coqdocindent{1.00em}
\coqdocid{IndAgentConv} \coqdocid{a} ($\og$\coqdocid{a'},\coqdocid{c},\coqdocid{sl},\coqdocid{sr}$\fg$) ($\og$\coqdocid{a'},\coqdocid{c},\coqdocid{sl'},\coqdocid{sr'}$\fg$).\coqdoceol

\medskip
 \coqdockw{Notation} "\coqdocid{sl}  \conva\ \coqdocid{sr}" := (\coqdocid{IndAgentConv} \coqdocid{a} \coqdocid{sl} \coqdocid{sr}). \coqdoceol

\medskip

\sssection{SGPE}{SGPE}

  \noindent \coqdockw{CoInductive} \coqdocid{SGPE}: \coqdocid{StratProf} \ensuremath{\rightarrow} \coqdocid{Prop} :=\coqdoceol
  \noindent | \coqdocid{SGPE\_leaf}: \ensuremath{\forall} \coqdocid{f}:\coqdocid{Utility\_fun}, \coqdocid{SGPE} ($\og$\coqdocid{f}$\fg$)\coqdoceol
  \noindent | \coqdocid{SGPE\_left}: \ensuremath{\forall} (\coqdocid{a}:\coqdocid{Agent})(\coqdocid{u} \coqdocid{v}: \coqdocid{Utility}) (\coqdocid{sl} \coqdocid{sr}:
  \coqdocid{StratProf}), \coqdoceol \coqdocindent{2.00em} \coqdocid{AlwLeadsToLeaf} ($\og$\coqdocid{a},\coqdocid{l},\coqdocid{sl},\coqdocid{sr}$\fg$) \ensuremath{\rightarrow}
  \coqdoceol \coqdocindent{2.00em} \coqdocid{SGPE} \coqdocid{sl} \ensuremath{\rightarrow} \coqdocid{SGPE} \coqdocid{sr} \ensuremath{\rightarrow} \coqdoceol \coqdocindent{2.00em}
  \coqdocid{s2u} \coqdocid{sl} \coqdocid{a} \coqdocid{u} \ensuremath{\rightarrow} \coqdocid{s2u} \coqdocid{sr} \coqdocid{a} \coqdocid{v} \ensuremath{\rightarrow} (\coqdocid{v}
  \leut \coqdocid{u}) \ensuremath{\rightarrow} \coqdoceol \coqdocindent{2.00em} \coqdocid{SGPE} ($\og$\coqdocid{a},\coqdocid{l},\coqdocid{sl},\coqdocid{sr}$\fg$)\coqdoceol
  \noindent | \coqdocid{SGPE\_right}: \ensuremath{\forall} (\coqdocid{a}:\coqdocid{Agent}) (\coqdocid{u} \coqdocid{v}:\coqdocid{Utility}) (\coqdocid{sl} \coqdocid{sr}:
  \coqdocid{StratProf}), \coqdoceol \coqdocindent{2.00em} \coqdocid{AlwLeadsToLeaf} ($\og$\coqdocid{a},\coqdocid{r},\coqdocid{sl},\coqdocid{sr}$\fg$) \ensuremath{\rightarrow}
  \coqdoceol \coqdocindent{2.00em} \coqdocid{SGPE} \coqdocid{sl} \ensuremath{\rightarrow} \coqdocid{SGPE} \coqdocid{sr} \ensuremath{\rightarrow} \coqdoceol \coqdocindent{2.00em}
  \coqdocid{s2u} \coqdocid{sl} \coqdocid{a} \coqdocid{u} \ensuremath{\rightarrow} \coqdocid{s2u} \coqdocid{sr} \coqdocid{a} \coqdocid{v} \ensuremath{\rightarrow} (\coqdocid{u}
  \leut \coqdocid{v}) \ensuremath{\rightarrow} \coqdoceol \coqdocindent{2.00em} \coqdocid{SGPE} ($\og$\coqdocid{a},\coqdocid{r},\coqdocid{sl},\coqdocid{sr}$\fg$). \coqdoceol

\sssection{Nash equilibrium}{Nash equilibrium}

\noindent
\coqdockw{Definition} \coqdocid{NashEq} (\coqdocid{s}: \coqdocid{StratProf}): \coqdocid{Prop} := \coqdoceol
\coqdocindent{1.00em}
\ensuremath{\forall} \coqdocid{a} \coqdocid{s'} \coqdocid{u} \coqdocid{u'}, \coqdocid{s'}\conva\coqdocid{s} \ensuremath{\rightarrow} 
(\coqdocid{s2u} \coqdocid{s'} \coqdocid{a} \coqdocid{u'}) 
\ensuremath{\rightarrow} (\coqdocid{s2u} \coqdocid{s} \coqdocid{a} \coqdocid{u}) \ensuremath{\rightarrow} 
(\coqdocid{u'} \leut \coqdocid{u}).\coqdoceol

\sssection{Dollar Auction}{Dollar Auction}
\label{sec:dollar-auction}

\noindent
\coqdockw{Notation} "[ x , y ]" := \coqdoceol
\coqdocindent{1.00em}
(\coqdocid{sLeaf} (\coqdocid{fun} \coqdocid{a}:\coqdocid{Alice\_Bob} \ensuremath{\Rightarrow} \coqdocid{match} \coqdocid{a} \coqdocid{with}
\coqdocid{Alice} \ensuremath{\Rightarrow} \coqdocid{x} $\mid$ \coqdocid{Bob} \ensuremath{\Rightarrow} \coqdocid{y} \coqdocid{end})) 

(\coqdocid{at} \coqdocid{level} 80).\coqdoceol

\sssection{{\Alice} stops always and \Bob{} continues always}{{\Alice} stops always and \Bob{} continues always}

\noindent
\coqdockw{Definition} \coqdocid{add\_Alice\_Bob\_dol} (\coqdocid{cA} \coqdocid{cB}:\coqdocid{Choice}) (\coqdocid{n}:\coqdocid{nat}) (\coqdocid{s}:\coqdocid{Strat}) :=\coqdoceol
\coqdocindent{1.00em}
\og\coqdocid{Alice},\coqdocid{cA},\og\coqdocid{Bob}, \coqdocid{cB},\coqdocid{s},[\coqdocid{n}+1, \coqdocid{v}+\coqdocid{n}]\fg,[\coqdocid{v}+\coqdocid{n},\coqdocid{n}]\fg.\coqdoceol

\medskip
\noindent
\coqdockw{CoFixpoint} \coqdocid{dolAcBs} (\coqdocid{n}:\coqdocid{nat}): \coqdocid{Strat} := \coqdocid{add\_Alice\_Bob\_dol} \coqdocid{l} \coqdocid{r} \coqdocid{n} (\coqdocid{dolAcBs} (\coqdocid{n}+1)).\coqdoceol

\medskip
\noindent
\coqdockw{Theorem} \coqdocid{SGPE\_dol\_Ac\_Bs}:  \ensuremath{\forall} (\coqdocid{n}:\coqdocid{nat}), \coqdocid{SGPE} \coqdocid{ge} (\coqdocid{dolAcBs} \coqdocid{n}).\coqdoceol

\sssection{\Alice{} continues always and \Bob{} stops always}{\Alice{} continues always and \Bob{} stops always}

\noindent
\coqdockw{CoFixpoint} \coqdocid{dolAsBc} (\coqdocid{n}:\coqdocid{nat}): \coqdocid{Strat} := \coqdocid{add\_Alice\_Bob\_dol} \coqdocid{r} \coqdocid{l} \coqdocid{n} (\coqdocid{dolAsBc} (\coqdocid{n}+1)).\coqdoceol

\medskip
\noindent
\coqdockw{Theorem} \coqdocid{SGPE\_dol\_As\_Bc}:  \ensuremath{\forall} (\coqdocid{n}:\coqdocid{nat}), \coqdocid{SGPE} \coqdocid{ge} (\coqdocid{dolAsBc} \coqdocid{n}).\coqdoceol

\sssection{Always give up}{Always give up}

\noindent
\coqdockw{CoFixpoint} \coqdocid{dolAsBs} (\coqdocid{n}:\coqdocid{nat}): \coqdocid{Strat} := \coqdocid{add\_Alice\_Bob\_dol} \coqdocid{r} \coqdocid{r} \coqdocid{n} (\coqdocid{dolAsBs} (\coqdocid{n}+1)).\coqdoceol

\medskip
\noindent
\coqdockw{Theorem} \coqdocid{NotSGPE\_dolAsBs}: (\coqdocid{v}$>$1) \ensuremath{\rightarrow} \~{}(\coqdocid{NashEq} \coqdocid{ge} (\coqdocid{dolAsBs} 0)).\coqdoceol

\sssection{Infinipede}{Infinipede}
\label{sec:infinipede}

\noindent
\coqdockw{Definition} \coqdocid{add\_Alice\_Bob\_cent} (\coqdocid{cA} \coqdocid{cB}:\coqdocid{Choice}) (\coqdocid{n}:\coqdocid{nat}) (\coqdocid{s}:\coqdocid{Strat}) :=\coqdoceol
\coqdocindent{1.00em}
$\og$\coqdocid{Alice},\coqdocid{cA},$\og$\coqdocid{Bob}, \coqdocid{cB},\coqdocid{s},[2\ensuremath{\times}\coqdocid{n}-1, 2\ensuremath{\times}\coqdocid{n}+3]$\fg$,[2\ensuremath{\times}\coqdocid{n},2\ensuremath{\times}\coqdocid{n}]$\fg$.\coqdoceol

\noindent
\coqdockw{CoFixpoint} \coqdocid{cent\_agu} (\coqdocid{n}:\coqdocid{nat}): (\coqdocid{Strat}) := 

\coqdocid{add\_Alice\_Bob\_cent} \coqdocid{r} \coqdocid{r} \coqdocid{n} (\coqdocid{cent\_agu} (\coqdocid{S} \coqdocid{n})).\coqdoceol

\medskip
\noindent
\coqdockw{Lemma} \coqdocid{AlwLeadsToLeaf\_cent\_agu}:  \ensuremath{\forall} (\coqdocid{n}:\coqdocid{nat}), \coqdocid{AlwLeadsToLeaf} (\coqdocid{cent\_agu} \coqdocid{n}).\coqdoceol

\medskip
\noindent
\coqdockw{Lemma} \coqdocid{LeadsToLeaf\_cent\_agu}: \ensuremath{\forall} (\coqdocid{n}:\coqdocid{nat}), \coqdocid{LeadsToLeaf} (\coqdocid{cent\_agu} \coqdocid{n}).\coqdoceol

\medskip
\noindent
\coqdockw{Theorem} \coqdocid{SGPE\_cent\_AGU}:  \ensuremath{\forall} (\coqdocid{n}:\coqdocid{nat}), \coqdocid{SGPE} \coqdocid{le} (\coqdocid{cent\_agu} \coqdocid{n}).\coqdoceol

\medskip
\noindent
\coqdockw{Lemma} \coqdocid{NashEq\_cent\_agu}: \ensuremath{\forall} (\coqdocid{n}:\coqdocid{nat}), \coqdocid{NashEq} \coqdocid{le} (\coqdocid{cent\_agu} \coqdocid{n}).\coqdoceol

\noindent
\sssection{Escalation}{Escalation}

\noindent
\coqdockw{Definition} \coqdocid{has\_an\_escalation\_sequence} (\coqdocid{g\_seq}:\coqdocid{nat} \ensuremath{\rightarrow} \coqdocid{Game}): \coqdocid{Prop} := \ensuremath{\forall} \coqdocid{n}:\coqdocid{nat}, \coqdoceol
\noindent
\ensuremath{\exists} \coqdocid{s},  \ensuremath{\exists} \coqdocid{s'}, \ensuremath{\exists} \coqdocid{a}, \coqdoceol
\coqdocindent{1.00em}
(\coqdocid{s2g} (\og\coqdocid{a},\coqdocid{l},\coqdocid{s},\coqdocid{s'}\fg) =\coqdocid{gbis}= \coqdocid{g\_seq} \coqdocid{n} \ensuremath{\land} \coqdocid{SGPE} (\og\coqdocid{a},\coqdocid{l},\coqdocid{s},\coqdocid{s'}\fg) \ensuremath{\lor}\coqdoceol
\coqdocindent{1.00em}
\coqdocid{s2g} (\og\coqdocid{a},\coqdocid{r},\coqdocid{s'},\coqdocid{s}\fg) =\coqdocid{gbis}= \coqdocid{g\_seq} \coqdocid{n} \ensuremath{\land} \coqdocid{SGPE} (\og\coqdocid{a},\coqdocid{r},\coqdocid{s'},\coqdocid{s}\fg)) \ensuremath{\land}\coqdoceol
\coqdocindent{1.00em}
\coqdocid{s2g} \coqdocid{s} =\coqdocid{gbis}= \coqdocid{g\_seq} (\coqdocid{n}+1).\coqdoceol

\medskip
\noindent
\coqdockw{Definition} \coqdocid{has\_an\_escalation} (\coqdocid{g}:\coqdocid{Game}) : \coqdocid{Prop} :=\coqdoceol
\coqdocindent{2.00em}
\ensuremath{\exists} \coqdocid{g\_seq}, (\coqdocid{has\_an\_escalation\_sequence} \coqdocid{g\_seq}) \ensuremath{\land} (\coqdocid{g\_seq} 0 = \coqdocid{g}).\coqdoceol
\coqdocindent{1.00em}
\coqdoceol

\medskip
\noindent
\coqdockw{Theorem} \coqdocid{Zero\_one\_game\_has\_an\_escalation}: \coqdoceol
\coqdocindent{2.00em}
\coqdocid{has\_an\_escalation} \coqdocid{Alice\_Bob} \coqdocid{nat} \coqdocid{ge}
\coqdocid{zero\_one\_game}.\coqdoceol

\medskip
\noindent
\coqdockw{Theorem} \coqdocid{Dollar\_game\_has\_an\_escalation}: \coqdoceol
\coqdocindent{2.00em}
\coqdocid{has\_an\_escalation} \coqdocid{Alice\_Bob} \coqdocid{nat} \coqdocid{ge} (\coqdocid{dollar\_game} 0).\coqdoceol




\ifBook\else
\begin{landscape}
\begin{figure}
  \centering
     \vspace*{60pt}
  \begin{tiny}
    \prooftree \prooftree \prooftree \justifies\og{ \Alice \mapsto 0, \Bob \mapsto 1}\fg
    \convalice \og{\Alice \mapsto 0, \Bob \mapsto 1}\fg \using \textit{ConvRefl}
    \endprooftree
    \ \ \prooftree \justifies\og{ \ldots}\fg \convalice \og{ \ldots}\fg \using
    \textit{ConvRefl}
    \endprooftree
    \justifies \og \emph{\Bob}, \lft, \og{ \ldots}\fg \og{ \ldots}\fg \fg%
    \convalice %
    \og\emph{\Bob}, \lft, \og\ldots\fg \og{ \ldots}\fg \fg \using \textit{ConvAgent}
    \endprooftree
    \ \ \ \prooftree \justifies \og{\Alice \mapsto 2, \Bob \mapsto 0}\fg \convalice
    \og{\Alice \mapsto 2, \Bob \mapsto 0}\fg \using \textit{ConvRefl}
    \endprooftree
    \justifies \og\emph{\Alice},\lft,\og\emph{\Bob}, \lft, \og{ \ldots}\fg, %
    \og{\ldots}\fg \fg, \og \ldots \fg\fg \convalice \og\emph{\Alice},\rgt,\og\emph{\Bob},
    \lft, \og{ \ldots}\fg, \og{\ldots}\fg \fg, \og \ldots \fg\fg \using
    \textit{ConvChoice}
    \endprooftree
  \end{tiny}
  \caption{An inductive proof of convertibility}
  \label{fig:ind_proof_conv}
\end{figure}
\end{landscape}
\fi

\printindex

\bibliographystyle{plainnat}

\end{document}

